\documentclass[superscriptaddress,twocolumn]{revtex4}
\usepackage[scaled]{helvet}

\usepackage[T1]{fontenc}

\usepackage{graphicx,amssymb,amsfonts,amsmath,amsthm,bbm}

\theoremstyle{definition}
\newtheorem{postulate}{Postulate}
\newtheorem{ppostulate}{Postulate}
\newtheorem{lemma}{Lemma}
\def\t{^{\mbox{\tiny T}}}
\def\id{\mathrm{1\hspace{-1.1mm} I}}

\begin{document}
\title{Existence of an information unit as a postulate of quantum theory}

\author{Llu\'{\i}s Masanes}
\affiliation{H.H.Wills Physics Laboratory, University of Bristol, Tyndall Avenue, Bristol BS8 1TL, U.K.}

\author{Markus P. M\"uller}
\affiliation{Perimeter Institute for Theoretical Physics, 31 Caroline Street North, Waterloo, ON N2L 2Y5, Canada}

\author{Remigiusz Augusiak}
\affiliation{ICFO-Institut de Ci\`encies Fot\`oniques, Mediterranean Technology Park, 08860 Castelldefels (Barcelona), Spain}

\author{David P\'erez-Garc\'ia}
\affiliation{Dpto. Analisis Matem\'atico and IMI, Universidad Complutense de Madrid, 28040 Madrid, Spain}

\date{\today}

\begin{abstract}
Does information play a significant role in the foundations
of physics? Information is the abstraction that allows
us to refer to the states of systems when we choose
to ignore the systems themselves. This is only possible
in very particular frameworks, like in classical or
quantum theory, or more generally, whenever there exists
an information unit such that the state of any system
can be reversibly encoded in a sufficient number of such
units. In this work we show how the abstract formalism
of quantum theory can be deduced solely from the existence
of an  information unit with suitable properties, together
with two further natural assumptions: the continuity
and reversibility of dynamics, and the possibility of
characterizing the state of a composite system by local
measurements. This constitutes a new set of postulates
for quantum theory with a simple and direct physical
meaning, like the ones of special relativity or
thermodynamics, and it articulates a strong connection
between physics and information.
\end{abstract}

\maketitle

\section{Introduction}

Quantum theory (QT) provides the foundation on top of which
most of our physical theories and our understanding of nature
sits. This peculiarly important role contrasts with our
limited understanding of QT itself, and the lack of consensus
among physicists about what this theory is saying about
how nature works.
Particularly, the standard postulates of QT are expressed
in abstract mathematical terms involving Hilbert spaces
and operators acting on them, and lack a clear physical meaning.
In other physical theories, like special relativity or
thermodynamics, the formalism can be derived from postulates
having a direct physical meaning, often in terms of the
possibility or impossibility of certain tasks. In this work
we show that this is also possible for QT.

The importance of this goal is reflected by the long history
of research on alternative axiomatizations of QT, which goes
back to Birkhoff and von Neumann~\cite{BvN, Mackey, AlfsenShultz}.
More recently, initiated by Hardy's work~\cite{5RA}, and
influenced by the perspective of quantum information theory,
there has been a wave of contributions taking a more physical
and less mathematical approach~\cite{5RA, Daki, MM, l1, CDP}.
These reconstructions of QT constitute a big achievement
because they are based on postulates having a more physical
meaning. However some of these meanings are not very direct,
and a lot of formalism has to be introduced in order to state
them. In this work we derive finite-dimensional QT from four
postulates having a clear and direct physical meaning, which
can be stated easily and without the need of heavy formalism.
Also, contrary to~\cite{Daki} we write all our assumptions explicitly.

We introduce a postulate named \textit{Existence of an Information
Unit}, which essentially states that there is only one type of
information within the theory. Consequently, any physical process
can be simulated with a suitably programmed general purpose
simulator. Since the input and output of these simulations are
not necessarily classical, this postulate is a stronger version
of the Church-Turing-Deutsch Principle (stated in~\cite{ Deutsch}).
On the other hand, it is strictly weaker than the Subspace Axiom,
introduced in~\cite{5RA} and used in~\cite{Daki} and~\cite{MM}.
An alternative way to read this postulate is that, at some level,
the dynamics of any system is substrate-independent. Within
theories satisfying the Existence of an Information Unit one
can refer to states, dynamics and measurements abstractly,
without specifying the type of system they pertain to;
and this is exploited by quantum information scientists,
who design algorithms and protocols at an abstract level,
without considering whether they will be implemented with
light, atoms or any other type of physical substrate.

\begin{figure}
    \centering
    \includegraphics[width=0.7\columnwidth]{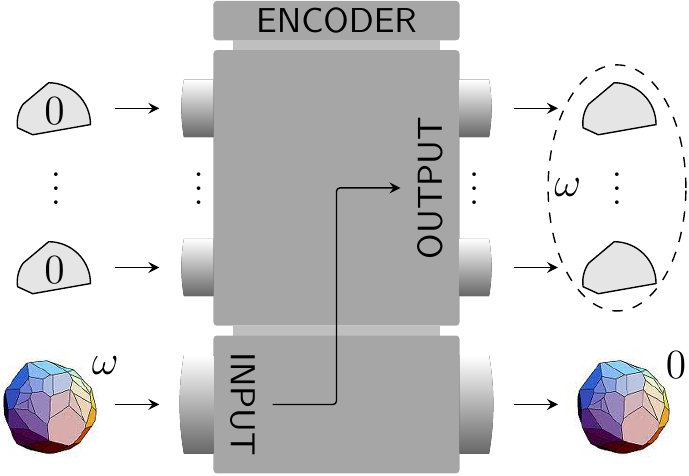}
    \caption{\textbf{Encoder.} \textit{Coding} is an ideal
    physical transformation which maps the unknown state $\omega$
    of an arbitrary system to an $n$-gbit state in a reversible
    way, and leaves the initial system in a reference state $0$.
    Reversibility means that there is another ideal physical
    transformation, \textit{decoding}, which undoes the above,
    bringing the arbitrary system back to its original state.}
    \label{encoder}
\end{figure}

More precisely, Existence of an Information Unit states that
there is a type of system, the generalized bit or \textit{gbit},
such that the state of any other system can be reversibly
encoded in a sufficient number of gbits (see Fig.~\ref{encoder}).
The reversibility of the encoding implies a correspondence
between the states of any system and the states of a multi-gbit
system (or an appropriate subspace). This correspondence also
extends to dynamics and measurements: if a given system lacks
a particular dynamics then we can encode its state into a
multi-gbit system, engineer the desired multi-gbit dynamics,
and decode back the resulting state on the given system---effectively
implementing the desired dynamics. In classical probability
theory the gbit is the bit, and in QT it is the qubit; but
we do not restrict ourselves to these two cases. We postulate
that, at some level, everything reduces to information, but
we do not specify what information is, except for some
requirements that the gbit must satisfy. One of this
requirements is No Simultaneous Encoding, which tells
that if a gbit is used to perfectly encode one classical
bit, it cannot simultaneously encode any further information.
Two close variants of this are Zeilinger's Principle~\cite{Zeilinger}
and Information Causality~\cite{IC}.

Our main contribution is to prove that QT is the only theory satisfying the
postulates of Continuous Reversibility, Tomographic Locality (both introduced
in~\cite{5RA}), The Existence of an Information Unit and No Simultaneous
Encoding. In order to prove this we make use of the classification of state
spaces performed in~\cite{GdlT,MMPA}, which shows that quantum state spaces have
very special properties. In relation to other work, in~\cite{IC} it was
suggested that Information Causality might be one of the foundational properties
of Nature. But our results support that its close variant, No Simultaneous
Encoding, might be a better candidate, since it seems to unveil more about the
structure of the physical world. Also, our results confirm Zeilinger's
idea~\cite{Zeilinger} that the limited amount of information carried by a qubit
is a defining property of QT.

\section{A theory independent formalism}

In classical probability theory, no matter how complex a system is, there is a
joint probability distribution which simultaneously describes the statistics of
all the measurements that can be performed on a system. In other words, there
exists a maximally informative measurement, of which all other measurements are
functions. This is not true in QT, and motivated by this, Birkhoff and von
Neumann generalized the formalism of classical probability theory to include
incompatible measurements~\cite{BvN}. This is nowadays called the framework of
generalized probability theories (GPTs), or the convex operational framework.

Recently, a lot of interest has been directed to the study of GPTs
\cite{5RA,Daki,MM,l1,CDP,IC,GdlT,purification,Barrett,boxworld,entropy,WO,
tradeoff,entropy2,PR,MAG}, with the double aim of reconstructing
QT, and exploring what lies beyond.
This, in particular, led to the discovery that many features originally thought
as specific to QT (such as for instance: Bell-inequality violation \cite{PR},
no-cloning \cite{Barrett,MAG}, monogamy of correlations \cite{MAG},
Heisenberg-type uncertainty relations \cite{WO,MAG}, measurement-disturbance
tradeoffs \cite{Barrett}, and the possibility of secret key distribution
\cite{BHK,PA}), are common to most GPTs.
In this light, the standard question \lq\lq{}why does nature seem to be quantum
instead of classical?\rq\rq{} sounds less appropriate than asking \lq\lq{}why QT
instead of any other GPT\rq\rq{}. Here we answer this question by showing that
any GPT different from QT violates at least one of our physically meaningful
postulates. In what follows we derive the formalism of GPTs from the basic
notions of state and measurement (a more detailed introduction can be found in
Appendix B).

In QT states are represented by density matrices. But, how can we
represent states in theories that we do not yet know? Let us follow \cite{5RA}.
The state of a system is represented by the
probabilities of some reference measurement outcomes $x_1,
\ldots x_k$ which are called \textit{fiducial}:
\begin{equation}\label{state1}
    \omega=
    \left[ \begin{array}{c}
        p(x_1) \\ \vdots \\ p(x_k)
    \end{array} \right]
    \in \mathcal{S} \subset \mathbb{R}^k\ .
\end{equation}
This list of probabilities has to be minimal but contain
sufficient information to predict the probability distribution of
all measurements that can be in principle performed on the system.
(Note that this is always possible since the list could contain
the probabilities corresponding to all measurements. In
particular, the list can be infinite, that is $k=\infty$.). The
number of fiducial outcomes $k$ is equal to the dimension of
${\cal S}$, as otherwise one fiducial probability would be
functionally related to the others, and the list not minimal. We
include the possibility that the system is present with certain
probability $U \in [0,1]$, which by consistency, is equal to the sum of
probabilities for all
the outcomes of a measurement. When the system is absent ($U=0$)
the fiducial outcomes have zero probability, hence the
corresponding state~(\ref{state1}) is the null vector ${\bf 0} \in
\mathcal{S}$. The subset of normalized states $\mathcal{N} = \{ \omega
\in\mathcal{S} : U(\omega)=1 \}$ has
dimension $k-1$.

By the rules of probability, the set of all the allowed states
$\mathcal{S}$ is convex. Indeed, by preparing the state $\omega_1$
with probability $q$ and $\omega_2$ with probability $1-q$, we
effectively prepare the mixed state $q\omega_1 + (1-q)\omega_2$.
The {\it pure states} of $\mathcal{S}$ are the normalized states
that cannot be written as mixtures. As an instance, the fiducial
outcomes for a qubit can be chosen to be $\sigma_x =1, \sigma_y
=1, \sigma_z =1, \sigma_z =-1$, and $U(\omega) = p(\sigma_z =1)
+p(\sigma_z =-1)$. Note that the set of fiducial outcomes need not be unique,
nor simultaneously measurable.

In the formalism of GPTs every convex set can be seen as
the state space $\mathcal{S}$ of an imaginary type of system, which in turn,
allows for constructing multipartite states spaces which violate Bell
inequalities more (or less) than QT.
This illustrates the degree to which this formalism generalizes classical
probability theory and QT, and allows us to catch a glimpse on the multitude of
alternative theories that we are considering here.

The probability of the measurement outcome $x$ when the system is
in the state $\omega$ is given by $E_x (\omega)$, where
\mbox{$E_x: \mathbb{R}^k \to \mathbb{R}$} is a linear function
satisfying $E_x(\mathcal{S}) \subseteq [0,1]$. To see this,
suppose the system is prepared in the mixture \mbox{$q\omega_1
+(1-q) \omega_2$}. Then the relative frequency of an outcome $x$
should not depend on whether the label of the actual preparation
$\omega_k$ is ignored before or after the measurement. As a result
\begin{equation*}
    E_x \big( q\omega_1 +(1-q) \omega_2 \big) =
    q E_x (\omega_1) +(1-q) E_x (\omega_2)\ ,
\end{equation*}
which together with $E_x ({\bf 0}) =0$ imply the linearity
of $E_x$.

Physical systems evolve with time. Often, the dynamics of a system
can be controlled by adjusting its environment, allowing in this
way to engineer different transformations of the system. A
transformation can be represented by a map $T :{\cal S} \to {\cal
S}$ which, for the same reason as outcome probabilities $E$, has
to be linear. Sometimes there are pairs of transformations whose
composition leaves the system unaffected, independently of its
initial state|in this case we say that these transformations are
reversible. The set of reversible transformations generated by
time-continuous dynamics forms a compact connected Lie group
${\cal G}$. Then, the elements of the corresponding Lie algebra
are the Hamiltonians of the theory (which in general have nothing
to do with Hermitian matrices). Our first postulate imposes that
this set of Hamiltonians is sufficiently rich.

\section{The new postulates for QT}

Now we are ready to present our new
axiomatization of QT (see Appendix A for extra discussion on the postulates).
The first postulate is motivated by the fact that most physical theories that we
know (like for example: classical mechanics, general relativity and QT) enjoy
time-continuous reversible dynamics.
\begin{postulate}[\textbf{Continuous Reversibility}]
    In any system, for every pair of pure states one can in principle engineer a
time-continuous reversible dynamics which brings one state to
the other.
\end{postulate}
Note that this postulate contains two independent assumptions: reversibility and
continuity. As pointed out by Hardy~\cite{5RA}, classical probability theory in
finite dimensions violates the continuity part of this postulate, since the set
of reversible transformations is the group of permutations, which is not
connected. Then, if we relax this continuity part, the family of theories
satisfying our postulates includes classical probability, but we do not know if
it also includes other non-classical and non-quantum theories.

Now we motivate the second postulate. Let $A$ and $B$ be two systems with
fiducial outcomes $x_1,\ldots x_{k_A}$ and $y_1,\ldots y_{k_B}$, respectively.
Is there any relation between these and the fiducial outcomes of the composite
system $AB$? The following postulate implies that the set of joint outcomes
$(x_i, y_j)$ for all $i,j$ is a fiducial set for the composite system. As a
consequence, joint local probabilities (and similarly joint local
transformations) can be obtained through the simple tensor-product rule $p(x,y)=
(E_x \otimes E_y) (\omega_{AB})$, where
\begin{equation*}
    \omega_{AB}=
    \left[ \begin{array}{c}
        p(x_1, y_1) \\ p(x_1, y_2) \\ \vdots \\ p(x_{k_A}, y_{k_B})
    \end{array} \right]
    \in {\cal S}_{AB} \subset \mathbb{R}^{k_A} \otimes \mathbb{R}^{k_B}\ .
\end{equation*}
This also implies the multiplicativity of dimensions: $k_{AB} = k_A
k_B$.
\begin{postulate}[\textbf{Tomographic Locality}]
    The state of a composite system is completely characterized by the
correlations of measurements on the individual components.
\end{postulate}
The third postulate, introduced for the first time in this work, states the
aforementioned existence of the gbit and imposes three properties that it must
satisfy.
\begin{postulate}[\textbf{Existence of an Information Unit}]
    There is a type of system (the gbit, with state space denoted ${\cal S}_{\rm
gbit}$) such that the state of any system can be reversibly encoded in a
sufficiently large number of gbits. Additionally, gbits satisfy the following:
\begin{enumerate}

    \item \textit{State Tomography Is Possible:} the state of a gbit can be
characterized with a finite number of measurements.

    \item \textit{All Effects Are Observable:} all linear functions $E: {\cal
S}_{\rm gbit} \to [0,1]$ correspond to outcomes of measurements that can in
principle be performed.

    \item \textit{Gbits Can Interact:} the group of time-continuous reversible
transformations for two gbits contains at least one element which is not product
$G_{AB} \neq G_A \otimes G_B$.
\end{enumerate}
\end{postulate}
Now, let us explain in more detail the content of Postulate~3. First, the
requirement that the state of any system can be reversibly encoded in a number
of gbits is formalized as follows.
For any state space $\mathcal{S}$ allowed by the theory there is a number $n$, a
physical transformation $T$ mapping ${\cal S}$ to the state space of $n$ gbits
$\mathcal{S}_{\rm gbit}^n$ (as in Fig.~\ref{encoder}), and another physical
transformation in the opposite direction $F: \mathcal{S}_{\rm gbit}^n \to
\mathcal{S}$, such that their composition is equal to the identity
transformation: $F(T(\omega)) =\omega$ for all $\omega \in {\cal S}$. This
implies that the dimension of $\mathcal{S}_{\rm gbit}^n$ is not smaller than
that of $\mathcal{S}$. If the two dimensions are equal then the two state spaces
are equivalent. But if the dimension of $\mathcal{S}_{\rm gbit}^n$ is larger
than that of $\mathcal{S}$ then there are states in $\mathcal{S}_{\rm gbit}^n$
which are not contained in $T(\mathcal{S})$; and for those the transformation
$F$ does not work with unit probability. Next, we explain the properties that
gbits satisfy.

\begin{enumerate}

\item The fact that gbits can be characterized with a finite number of
measurements is equivalent to say that the dimension of the state space
$\mathcal{S}_{\rm gbit}$, denoted $k_{\rm gbit}$, is finite. This may seem
contradictory with the fact that in quantum theory, there is a type of
tomography for infinite-dimensional systems. But these systems have an infinite
number of perfectly distinguishable states, hence, after imposing additional
constrains (like an upper bound on the energy) the effective Hilbert space is
finite, and state tomography becomes possible. However, as a consequence of No
Simultaneous Encoding, gbits have only two perfectly distinguishable states.

\item In classical probability theory and QT, all effects correspond to outcomes
of measurements. This need not be the case in general, but in order to single
out QT, we have to impose it on gbits. Although in this form this assumption
does not have a direct operational meaning, it can be formulated in a way that
it does (see~\cite{CDP} or Appendix A.5). Unfortunately, this alternative
formulation is more cumbersome, hence we avoid it here.

\item Interaction is fundamentally necessary in order not to have an essentially
trivial universe. The requirement that any system can be reversibly encoded in
gbits implies that, if gbits do not interact among them, then no other system
interacts. Postulate~3.3 rules out this possibility.
\end{enumerate}

\begin{postulate}[\textbf{No Simultaneous Encoding}] If a gbit is used to
perfectly encode one classical bit, it cannot simultaneously encode any further
information.
\end{postulate}

To illustrate Postulate~4 let us consider a communication task involving two
distant parties, Alice and Bob.
Similarly as in the scenario for Information Causality~\cite{IC}, suppose that
Alice is given two bits $a,a' \in \{0,1\}$, and Bob is asked to guess one of
them. He will base his guess on information sent to him by Alice, encoded in one
gbit. Alice encodes the gbit with no knowledge of which of the two bits, $a$ or
$a'$, Bob will try to guess.
No Simultaneous Encoding imposes that, in a coding/decoding strategy in which
Bob can guess $a$ with probability one, he knows nothing about $a'$. That is, if
$b,b'$ are Bob's guesses for $a,a'$ then
\[
	P(b|a,a')= \delta^a_b \ \ \Rightarrow\ \
	P(b'|a,a'=0)= P(b'|a,a'=1)
\]
where $\delta^a_b$ is the Kronecker tensor.
A straightforward consequence of this is that $\mathcal{S}_\mathrm{gbit}$
contains at most two perfectly distinguishable states. Other consequences are
derived below.

Another way to state No Simultaneous Encoding is: suppose that Alice encodes
$a,a\rq{}$ in the four states $\omega_{a,a\rq{}} \in \mathcal{N}_\mathrm{gbit}$.
If there is an effect $E$ such that $E(\omega_{a,a\rq{}}) = \delta_{a,0}$ then
any effect $E\rq{}$ satisfies $E\rq{} (\omega_{a,0}) = E\rq{} (\omega_{a,1})$.
As it is illustrated in Fig.~\ref{fwic}, this together with All Effects Are
Observable (cf. Postulate~3.2) imply that all states in the boundary of
$\mathcal{N}_\mathrm{gbit}$ are pure (first arrow in Figure~3).

\begin{figure}\label{fwic}
    \centering
    \includegraphics[height=34mm]{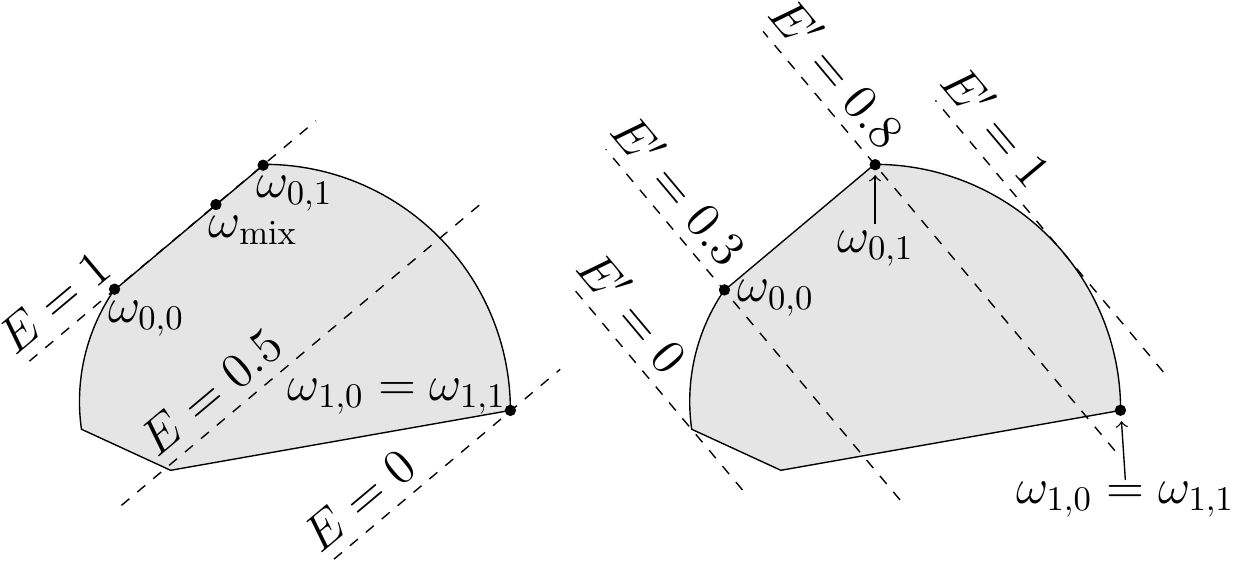}
    \caption{\textbf{No Simultaneous Encoding.} This figure shows that there
cannot be mixed states in the boundary of $\mathcal{N} _\mathrm{gbit}$. If there
is one, say $\omega_{\mathrm{mix}}$, then this boundary contains a non-trivial
face (left figure). Since all effects are observable, we can decode $a$ with the
effect $E$, which gives probability one for all states inside that facet, and
probability zero for some other state(s). By encoding $(a,a\rq{})= (0,0), (0,1)$
in two different states inside that face we can perfectly retrieve $a$ through
$E$, while still getting some partial information about $a\rq{}$ with another
effect $E\rq{}$ (right figure).}
\end{figure}

An interesting remark is that our four postulates, except for part 2 of
Postulate 3, express the possibility or impossibility of certain tasks. This is
very similar in spirit to formulations of the second law of thermodynamics, the
principle of equivalence of gravitation and inertia, or the principle of light
speed invariance. Contrary, this remains completely hidden in the standard
postulates of QT.

\section{Argumentation}

Having stated our four postulates, let us now show that the only
theory obeying them is QT. In what follows we present an overview
of the proof, while its detailed version can be found in Appendix D. First of
all, Postulate 3.1 implies that the dimension of the gbit $k_\mathrm{gbit}$ is
finite. Then, Continuous Reversibility associates to
any state space $\mathcal{S}$ a group of reversible
transformations $\mathcal{G}$, having an invariant scalar product
with respect to which all pure states of $\mathcal{S}$ have the
same norm. This together with the fact that the boundary of
$\mathcal{N}_\mathrm{gbit}$ contains only pure states imply that
it is an ellipsoid (second arrow in Figure~\ref{argumentation}).
By setting as the new set of fiducial outcomes the effects
corresponding to the principal axes of the ellipsoid (recall that
all effects are observable), $\mathcal{N}_\mathrm{gbit}$ becomes a
Euclidean ball (third arrow in Figure~\ref{argumentation}). But
what is the state space of two gbits
$\mathcal{S}_\mathrm{gbit}^2$?

\begin{figure}\label{argumentation}
    \centering
    \includegraphics[height=70mm]{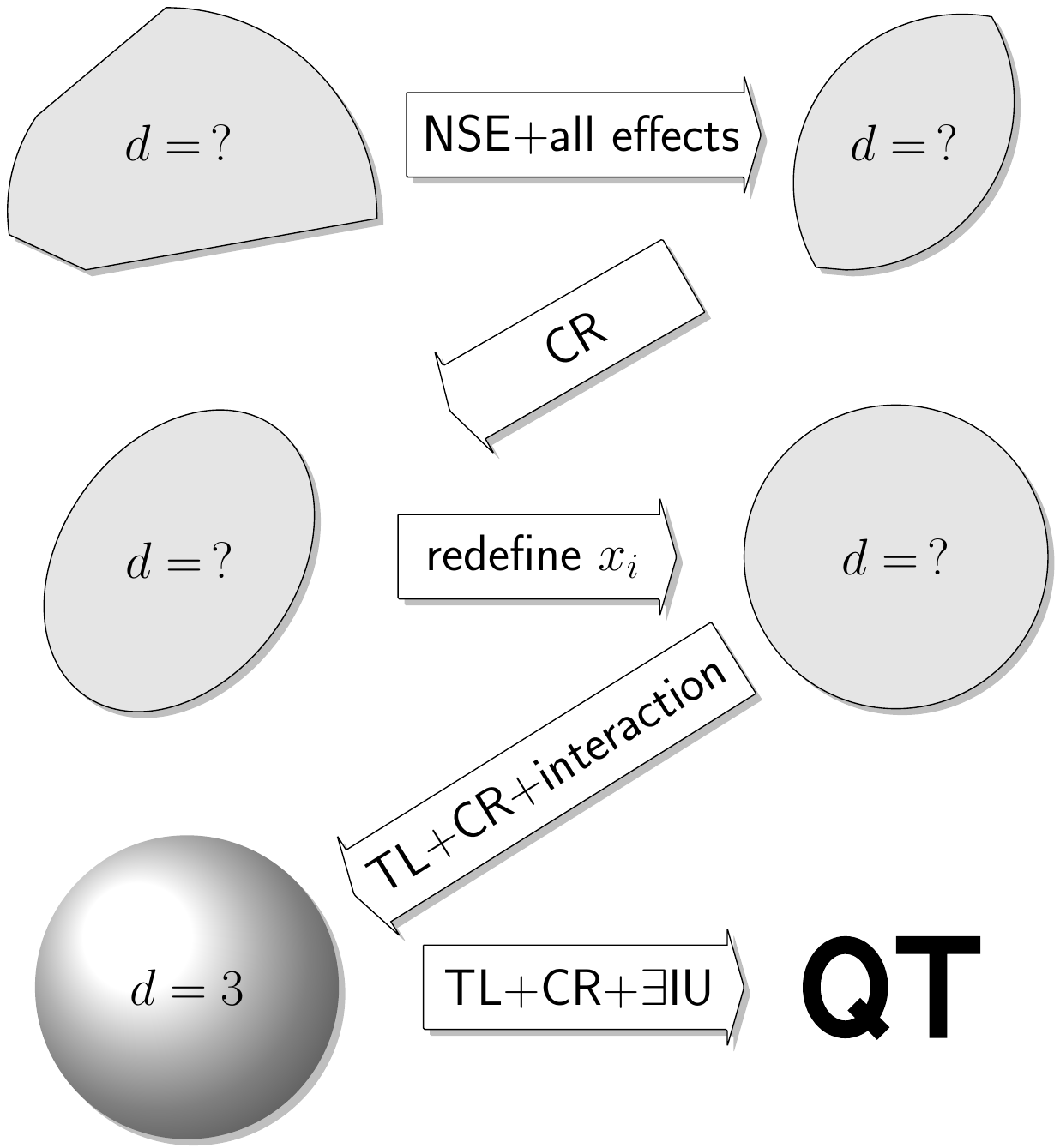}
    \caption{\textbf{Summary of the argumentation.} This figure synthesizes the
proof that the only theory satisfying our four postulates is QT. Each step
(represented by an arrow) invokes part of the content of the postulates
(specified inside the arrow) and reveals new information about the state space
of the generalized bit. Initially (top-left) $\mathcal{N} _\mathrm{gbit}$ is an
arbitrary convex set with arbitrary dimension $d= k_\mathrm{gbit} -1$, and
finally (down-left) it is a 3-dimensional ball. The first arrow represents the
step explained in Figure~2. The abbreviations CR, TL, $\exists$IU, NSE, ``all
effects'' and ``interaction'' refer respectively to Continuous Reversibility,
Tomographic Locality and Existence of an Information Unit, No Simultaneous
Encoding, All Effects Are Observable, Gbits Can Interact.}
\end{figure}

According to Continuous
Reversibility the set of pure states of two gbits can be written
as $\{ G(\omega \otimes \omega) | G \in
\mathcal{G}_\mathrm{gbit}^2 \}$, where
$\mathcal{G}_\mathrm{gbit}^2$ is the group of reversible
transformations for two gbits, and $\omega$ is a pure state of one
gbit. The group $\mathcal{G}_\mathrm{gbit}^2$ is unknown, but by
consistency, it must contain all local transformations
\begin{equation}\label{GG in G2}
  \mathcal{G}_\mathrm{gbit} \otimes \mathcal{G}_\mathrm{gbit}
  \ \subseteq\  \mathcal{G}_\mathrm{gbit}^2\ ,
\end{equation}
and
it must generate states with well-defined probabilities, meaning
that
\begin{equation}\label{G constraint}
    (E_x \otimes E_y) (G (\omega \otimes \omega)) \in [0,1]\ ,
\end{equation}
holds for all $G \in \mathcal{G}_\mathrm{gbit}^2$ and any (local) gbit
effects $E_x, E_y$. The family of all bipartite state spaces satisfying these
two consistency requirements was analyzed in~\cite{MMPA}, and it was shown that,
with the exception of the quantum case, all state spaces contain separable
states only, and the corresponding groups $\mathcal{G}_\mathrm{gbit}^2$ contain
product transformations only. But this is in contradiction with Gbits Can
Interact! Hence, the combination of this postulate together with
requirements~\eqref{GG in G2} and~\eqref{G constraint} is very restrictive, and
it implies that the Euclidean ball $\mathcal{N}_\mathrm{gbit}$ has dimension
$k_\mathrm{gbit} -1=3$ and $\mathcal{G}_\mathrm{gbit} = \mathrm{SO} (3)$ (see
Appendix D and~\cite{MMPA}). This tells us that, locally, gbits are identical to
qubits, but it is not clear yet whether
multi-gbit state spaces $\mathcal{S}_\mathrm{gbit}^n$ having a
non-quantum structure are consistent with our postulates. In
Reference~\cite{GdlT} all possible joint state spaces of $n$
systems that are locally qubits are classified, and it is found
that the only possibility allowing for non-product reversible
transformations is multi-qubit QT. So gbits must be locally and
globally like qubits: $\mathcal{S} _\mathrm{gbit}^n$ is the set of
$2^n$-dimensional density matrices and
$\mathcal{G}_\mathrm{gbit}^n$ is the adjoint representation of
$\mathrm{SU} (2^n)$. Finally, since any state space is reversibly
encodable in a multi-qubit system, the states, transformations and
measurements of any system can be represented within the formalism
of finite-dimensional QT.

\section{Conclusions}

Given the controversy around the foundations of QT, it is very natural to seek
for modifications and generalizations of QT. And some authors claim that this is
necessary in order to unify the description of quantum and gravitational
phenomena~\cite{Penrose, HG}.
Each set of postulates for QT provides a different starting point for this
endeavor. For example, starting from the standard postulates, some authors have
modified the Schr\"odinger equation~\cite{Weinberg}, or the field of numbers
over which the Hilbert space is defined~\cite{quaternions}. But a radically
different starting point is provided by our postulates.
In Appendix D we relax that Gbits Can Interact (Postulate~3.3) and characterize
the family of theories that emerges (see also~\cite{MMPA}). It is shown that all
these alternative theories, though being not classical, do not contain
entanglement and do not violate Bell inequalities. If instead, we relax the
continuity part of the Continuous Reversibility Postulate, then the family of
theories that emerges includes classical probability theory, but we leave for
future research whether other theories are included as well. This seems an
important question, because in our construction and others~\cite{5RA}, the
continuity of the dynamics appears to be the dividing feature between classical
probability theory and QT.

A repeated pattern in the history of science is the promotion of a scientific
instrument to a model for understanding the world. For
instance, there are some proposals for viewing the universe as a giant computer
(classical~\cite{Zuse} or quantum~\cite{Lloyd}).
But what is the physical content of this? Can the dynamics of any system be
understood as computation? After all it is computing its
future state. We propose that a requisite for upgrading time-evolution to
computation is that such time-evolution is substrate-independent, in the sense
that it can be simulated in a system of information units. In this work we have
taken this perspective seriously: we have promoted the Existence of an
Information Unit with suitable properties to be a postulate, and we have shown
that this together with the very natural postulates of Continuous Reversibility
and Tomographic Locality, uniquely determine the full mathematical formalism of
QT.

\section*{Acknowledgments}
Ll. M. acknowledges support from CatalunyaCaixa, the EU ERC Advanced Grant NLST, the EU Qessence project, the Templeton Foundation and the FQXi large grant project ``Time and the structure of quantum theory''. Research at Perimeter Institute for Theoretical Physics is supported in part by the Government of Canada through NSERC and by the Province of Ontario through MRI. R. A. acknowledges support from AQUTE, TOQATA, and Spanish MINCIN through the Juan de la Cierva program. D. P.-G. acknowledges support from the Spanish grants MTM2008-01366 and S2009/ESP-1594.

\newpage
\appendix
\section*{Appendices}

The following appendices contain some remarks on the postulates for quantum
theory (QT) that we have presented, a thorough introduction to the formalism of
generalized probability theory (GPT), and a rigorous proof of the claims made in
the article.

\section{Some remarks on the postulates}

\subsection{Continuous Reversibility}

The postulate of Continuous Reversibility was introduced
in~\cite{5RA}, under the name of \lq\lq{}continuity axiom\rq\rq{}.
One of the motivations to assume the reversibility and continuity
of time evolution is that the most fundamental theories that we
know, classical or quantum, enjoy it. The meaning of continuity
here is that, when the system evolves for a very small time, the
initial and the final states are almost indistinguishable. This is
equivalent to the connectedness of the group of dynamical transformations.

Up to present-day experimental accuracy, time evolution seems to
be continuous. But it is conceivable that at a small scale it is
discrete, and continuity is only an approximation that is valid at
sufficiently large scales. In this case, our postulates could be
understood as describing the corresponding large-scale effective
theory.

A very interesting open problem is the classification of
theories which satisfy all our postulates except for the
continuity part of Postulate~1, that is, when the group of
reversible transformations $\mathcal{G}$ is not required to be
connected. One theory of this kind is classical probability
theory, but it is not known if there are others. In~\cite{MM} it is shown that,
if in addition one assumes the postulate of \lq\lq{}Equivalence of
Subspaces\rq\rq{}, which is arguably very strong, the only theories that survive
are QT and classical probability theory.

\subsection{Tomographic Locality}

The axiom of Tomographic Locality has a direct operational
meaning, but additionally, it is mathematically very natural,
since it endows state spaces of multipartite systems with the
familiar tensor-product structure. The authors of~\cite{HW}
consider ways of relaxing Tomographic Locality.

\subsection{Existence of an Information Unit}

Any state of a quantum system can be encoded with arbitrary
precision in a sufficient number of classical bits. For instance,
this can be achieved by writing its density matrix in a bit
string. However, if we are given a quantum system in an unknown
state, there is no way we can obtain this bit string, unless we
are given a large number of copies of the system. By measuring a
single copy of the system we could encode the outcome in a bit
string, but there is no way we can prepare the same state if the
only information we have is this bit string. In other words, this
encoding is not reversible.

In summary, the classical bit does not constitute a unit of
information capable of reversibly encoding the state of any
quantum system, although it does if we restrict to classical
systems. However, according to QT, the qubit does constitute such
a unit of information, and we think that this is a fundamental
aspect of QT. Hence, in this work we promote this to postulate.

Our approach can be summarized in the following slogan:
\textit{Information does play a significant role in the
foundations of physics, but we do not say what information
actually is.} In this sense, our postulates specify some
properties that information must satisfy, but they do not right
away specify its physical implementation. That is, they do not
postulate that information must be quantum---instead, this fact is
\textit{derived as a consequence} of the properties that
information should satisfy.

\subsection{State Tomography Is Possible}

In finite-dimensional quantum system, the dimension $k$ and the number of
perfectly distinguishable states $c$ are related through the
equation $k= c^2$. (Do not confuse the dimension of the set of unnormalized
density matrices $k$ with the dimension of the associated Hilbert space $c$.)
However, for arbitrary state spaces, the only constraint between
the positive integers $k$ and $c$ is $k \geq c$. Hence, although
not very natural, it is possible that systems with only two
perfectly distinguishable states (like, for instance, gbits) have
infinite dimension. However, since for any finite $k_\mathrm{gbit}
\neq 3$ interaction between gbits is impossible, we are inclined to
think that $k_\mathrm{gbit} =\infty$ is also incompatible with
Postulate~3.3. (In order to prove this, transitive groups on the
infinite-dimensional euclidean sphere should be considered.) Consequently, we
conjecture that Postulate~3.1 is redundant, but, since we cannot prove this
fact, we keep the
postulate.

Independently of the above discussion, the finiteness of
$k_\mathrm{gbit}$ is necessary if we want state tomography to be
possible. The fact that in QT state tomography of
infinite-dimensional systems is possible is due to the fact that
these systems also have an infinite number of perfectly
distinguishable states, and with a bound on the energy, one can
effectively consider the system to be finite-dimensional. But this
does not work if the infinite-dimensional system has two distinguishable states,
like a gbit. Additionally, it is desirable to perform tomography on the unit of
information with no need of extra assumptions, like upper bounds on the energy.

\subsection{All Effects Are Observable}
\label{SubsecAllEffects}

A priori, given any state space of a physical system, all effects (i.e., linear
functionals that yield valid probabilities between $0$ and $1$ on all states)
describe outcome probabilities of conceivable measurements.
However, one might imagine that there are additional physical
restrictions, similar to the superselection rules, that somehow
render some of the effects impossible to appear in actual
measurements. Our postulate says that we do not consider this more
complicated situation: we assume that, at least in
principle, every effect can appear as the outcome of some
measurement.

An interesting fact is that this postulate can be weakened without affecting the
conclusions of our work. Instead of all the effects, only effects $E$ for which
there are two states $\omega_0, \omega_1 \in \mathcal{S} _\mathrm{gbit}$ such
that $E(\omega_0)=0$ and $E(\omega_1)=1$ need to be observable. This second
statement is logically equivalent to
Chiribella-D\rq{}Ariano-Perinotti\rq{}s information-theoretic postulate named
\lq\lq{}Perfect Distinguishability\rq\rq{} (see~\cite{CDP}), phrased as
\lq\lq{}every state that is not
completely mixed can be perfectly distinguished from some other state\rq\rq{}.
In their notation, a state $\omega \in \mathcal{S}$ is \lq\lq{}completely
mixed\rq\rq{} if for any other state $\omega_1 \in \mathcal{S}$ there is a
decomposition of $\omega$ of the form $\omega= p\omega_1 + (1-p)\omega_2$ with
$p>0$. Which can be interpreted as the state $\omega$ being compatible with the
preparation of any other state $\omega_1$. Our choice of Postulate~3.2 is
motivated by simplicity.

\subsection{Gbits Can Interact}

Interaction is necessary for the creation of entanglement, and
consequently, for the violation of Bell inequalities. But even
more, without interaction classical computation is impossible,
since single-gbit gates cannot be universal. More generally, the
emergence of structure and complex systems seems impossible in a
world without interaction. For these reasons we find it very
natural to postulate that gbits can interact.

We claim that one can explore what lies beyond QT by relaxing some
of our postulates. For example, the family of theories which are
compatible with all our postulates except for Gbits Can Interact (Postulate~3.3)
is given in~\cite{MMPA}.
Obviously, in all these other theories there is no
entanglement.

\subsection{No Simultaneous Encoding}

Let us describe the two differences between No Simultaneous Encoding (NSE) and
Information Causality (IC). First, the communication task associated to IC can
be seen as a teleportation analog of the one associated to NSE. That is, in IC,
Alice sends Bob classical information in a context where Bell-violating
correlations are shared, while in NSE, Alice sends Bob a possibly non-classical
system in a context where no correlations are shared. Note that these two
communication tasks become equivalent in theories where steering or
teleportation are possible.

Second, the trade-off between Bob\rq{}s knowledge on $a$ and $a\rq{}$ imposed by
NSE is based on the guessing probability, while the one of IC is based on the
Shannon mutual information, which in this context lacks an operational meaning.
Additionally, the bound based on the guessing probability is weaker than the one
based on the mutual information.

On its own, Information Causality suffices to derive Tsirelson\rq{}s
bound~\cite{T} and other constraints on quantum correlations~\cite{ABPS}, but
not all of them~\cite{GWAN}. Here we show that the full structure of quantum
correlations can be derived from a variant of Information Causality together
with our other postulates. This gives an answer to the question of how to
characterize all quantum correlations from physical principles.

\section{Introduction to generalized probability theories}\label{gpt}

In this section we introduce a formalism that allows us to
represent states, measurements and transformations in a
theory-independent way. More complete material can be found
in~\cite{l1,Pfister}.

\subsection{States}\label{ss}

In this formalism, the state of a system is represented by the
probabilities of some reference measurement outcomes $x_1,
\ldots x_k$ which are called \textit{fiducial}:
\begin{equation}\label{state}
    \omega=
    \left[ \begin{array}{c}
        p(x_1) \\ \vdots \\ p(x_k)
    \end{array} \right]
    \in {\cal S} \subset \mathbb{R}^k\ .
\end{equation}
This list of probabilities has to be minimal but contain
sufficient information to predict the probability distribution of
all measurements that can be in principle performed on the system.
Note that this is always possible since the list could contain the
probabilities corresponding to all measurements. In particular,
the list can be infinite, i.e.\ $k=\infty$. We include the
possibility that the system is present with a certain probability
$u \in [0,1]$. This probability is given by the \textit{unit effect}, $u=
U(\omega)$, which is equal to the sum of
probabilities for all the outcomes of a measurement. When the
system is absent ($u=0$) the fiducial outcomes have zero
probability, hence the corresponding state~(\ref{state}) is the
null vector ${\bf 0} \in \mathcal{S}$. The subset of normalized
states is $\mathcal{N} = \{\omega\in\mathcal{S}|U(\omega)=1\}$.
Clearly, any state $\omega \in \mathcal{S}$ can be written as $\omega = u \nu$,
where $u= U(\omega)$ is the norm of $\omega$, and $\nu \in \mathcal{N}$ is the
normalized version of $\omega$. This last statement is equivalent to the fact
that $\mathcal{S}$ is the convex hull of $\mathcal{N}$ and ${\bf 0}$
(see~\cite{convex_book} for a definition of convex hull).

By the rules of probability, the set of all the allowed states
$\mathcal{S}$ is convex. Indeed, by preparing the state $\omega_1$
with probability $q$ and the state $\omega_2$ with probability
$1-q$, we effectively prepare the mixed state $q\omega_1 +
(1-q)\omega_2$. The {\it pure states} of $\mathcal{S}$ are the
normalized states that cannot be written as mixtures, that is, the
extremal points of $\mathcal{N}$. Hence, we denote the set of pure
states by $\mathrm{ext} \mathcal{N}$. The number of fiducial
outcomes $k$ is equal to the dimension of ${\cal S}$, as otherwise
one fiducial probability would be functionally related to the
others, and the list not minimal. Hence, the dimension of
$\mathcal{N}$ is $k-1$. As an instance, the fiducial outcomes for
a quantum two-level system (or qubit) can be chosen to be
$\sigma_x =1, \sigma_y =1, \sigma_z =1, \sigma_z =-1$; hence,
$k=4$ and $U(\omega) = p(\sigma_z =1) +p(\sigma_z =-1)$. Note,
however, that the set of fiducial outcomes need not be unique, nor
simultaneously measurable. The role of fiducial outcomes is comparable to that of basis vectors in linear algebra.

By changing the set of fiducial outcomes one can transform the
geometry of a state space. However, as shown in the next
paragraph, all such transformations are linear and invertible.
Conversely, all invertible linear transformations generate an
equivalent state space, hence, state spaces are equivalence
classes of convex sets under linear equivalence. Indeed, for any
invertible linear transformation $L: \mathbb{R}^k \to
\mathbb{R}^k$, we can redefine the states $\omega \to L (\omega)$
and the effects $E \to E\circ L^{-1}$ such that the physics is
unchanged $(E\circ L^{-1}) (L (\omega)) = E(\omega)$. In a similar
fashion, by redefining the transformations as $T \to L\circ T
\circ L^{-1}$, the dynamical structure of the system is unchanged
$(L\circ T \circ L^{-1})(L(\omega))= L(T(\omega))$. Hence, every
possible state space is an equivalence class of convex sets
related by linear transformations. Note that in general, the
components of the vector $L(\omega)$ are not in $[0,1]$, so we
cannot interpret them as fiducial probabilities. However, as
illustrated below, sometimes it is advantageous to loose the
probability interpretation of the components of $L(\omega)$ in
favor of a different representation that is easier to handle.

\subsection{Measurements}

The probability of the measurement outcome $x$ when the system is
in state $\omega$ is given by $E_x (\omega)$ where \mbox{$E_x:
\mathbb{R}^k \to \mathbb{R}$} is a linear functional satisfying
$E_x(\mathcal{S}) \subseteq [0,1]$. To see this, suppose the
system is prepared in the mixture \mbox{$q\omega_1 +(1-q)
\omega_2$}. Then the relative frequency of an outcome $x$ should
not depend on whether the label of the actual preparation
$\omega_k$ is ignored before or after the measurement. As a result
\begin{equation*}
    E_x \big( q\omega_1 +(1-q) \omega_2 \big) =
    q E_x (\omega_1) +(1-q) E_x (\omega_2),
\end{equation*}
which together with $E_x ({\bf 0}) =0$ imply the linearity of
$E_x$. Linear functions $E$ satisfying $E(\mathcal{S}) \subseteq
[0,1]$ are called \textit{effects} and can be written as a scalar
product $E(\omega)=E\cdot\omega=\sum_{i=1}^k E^i p(x_i)$ with $E$
being a vector from $\mathbbm{R}^k$. An effect that plays a
special role is the unit effect $U(\omega)= \sum_{i=1}^k U^i p(x_i)$, which
gives the probability that the system is present. In classical
probability theory and QT, all effects correspond to outcomes of
measurements, but this need not be the case in general (this is
related to the discussion in Subsection~\ref{SubsecAllEffects}).
Below we postulate this to hold for gbits.

An $n$-outcome measurement is represented by $n$ effects $E_1, \ldots, E_n$
satisfying
\begin{equation*}
    E_1 + \cdots + E_n =U\ .
\end{equation*}
Alternatively speaking, this formula means that the outcome
probabilities are normalized, implying that we only need to
specify $n-1$ effects. In particular, a two-outcome measurement is
represented by a single effect $E$, which, for a normalized state
$\omega\in \mathcal{N}$, gives outcome probabilities $E(\omega)$
and $1-E(\omega)$. We say that $\omega_1, \ldots, \omega_n \in
\mathcal{S}$ are perfectly distinguishable states if there is an
$n$-outcome measurement in $\mathcal{S}$ such that $E_i (\omega_j)
= \delta_{ij}$, where $\delta_{ij}$ is the Kronecker tensor.

\subsection{Transformations}\label{s trans}

Physical systems evolve with time. Often, the dynamics of a system
can be controlled by adjusting its environment, allowing in this
way to engineer different transformations of the system. A
transformation is represented by a map $T : \mathcal{S} \to
\mathcal{S}$ which, for the same reason as outcome probabilities
$E$, has a linear extension $T :\mathbb{R}^k \to
\mathbb{R}^k$ and satisfies the consistency constraint
$T(\mathcal{S}) \subseteq \mathcal{S}$. Using linearity and the decomposition
$\omega = u \nu$ (with $u= U(\omega)$ and $\nu \in \mathcal{N}$) we have
$(U\circ T)(\omega) = u (U\circ T) (\nu) \leq u = U(\omega)$ for all $\omega \in
\mathcal{S}$. We write this inequality with the short-hand notation
\begin{equation}\label{non-increasing}
	U\circ T \preceq U\ ,
\end{equation}
meaning that it holds for all states of the corresponding state space.
Equation~(\ref{non-increasing}) is a generalization of the quantum requirement
that physical operations (apart from being completely positive maps) must not
increase the trace.

Sometimes there are pairs of transformations $T,F: \mathcal{S} \to \mathcal{S}$
whose composition leaves the system unaffected, independently of its initial
state: $T\circ F=I$, where $I$ is the identity transformation. Using elementary matrix theory we know that, if this holds then the equality $F\circ T=I$ holds too; hence we say that both transformations are \textit{reversible}; and we write $T^{-1} =F$ and $F^{-1} =T$. Note that the invertibility of the matrix $T$ associated to a physical transformation does not imply that its inverse $T^{-1}$ satisfies the consistency constraints of a physical transformation $T^{-1} (\mathcal{S}) \subseteq \mathcal{S}$, and that it is allowed by the theory. Hence, reversibility is more restrictive than invertibility. If $T$ is reversible then inequality~(\ref{non-increasing}) gives $U = (U\circ T^{-1}) \circ T \preceq U\circ T \preceq U$, which implies $U\circ T=U$. The physical interpretation of this last equality is: \textit{reversible transformations are deterministic}.

The set of transformations generated by time-continuous reversible
dynamics forms a connected matrix group ${\cal G}$. From a
physical point of view, it makes sense to include in $\mathcal{G}$
all transformations which can be approximated arbitrarily well by
those allowed by the theory, or equivalently, we assume that $\mathcal{G}$ is topologically closed. Therefore, $\mathcal{G}$ is a compact matrix group, which according to~\cite{matrix_groups}, must be a Lie group. The elements of the corresponding Lie algebra are the Hamiltonians of the theory (which in general have nothing to do with Hermitian matrices; even in QT, these would be ``superoperators'' acting on the space of density matrices). The postulate of Continuous Reversibility imposes that this set of Hamiltonians is sufficiently rich.

One can implement transformations which, in addition to a possible
change of state, also transform the type of system. A
transformation that takes a system from a state space
$\mathcal{S}_1$ and outputs a system from a different state
space $\mathcal{S}_2$, with respective dimensions $k_1$ and $k_2$,
can be represented by a linear map $T: \mathbb{R}^{k_1} \to
\mathbb{R}^{k_2}$ satisfying the consistency constraint
$T(\mathcal{S}_1) \subseteq \mathcal{S}_2$.
If a physical theory forbids the ``transmutation'' of types of systems then
transformations which effectively modify the type of system can still be
implemented with the method described in Fig.~1, where the input is an
$\mathcal{S}_1$-system in an arbitrary state $\omega \in \mathcal{S}_1$ together
with an $\mathcal{S}_2$-system in a fixed state 0, and the output is an
$\mathcal{S}_1$-system in a fixed state 0 together with an
$\mathcal{S}_2$-system in the output
state $T(\omega)\in \mathcal{S}_2$. As in~(\ref{non-increasing}), one can show
that $U_2 \circ T \preceq U_1$, where $U_1, U_2$ are the unit effects of
$\mathcal{S}_1, \mathcal{S}_2$ respectively. And again,  the equality $U_2 \circ
T = U_1$ holds when the transformation $T$ is deterministic.

We say that a transformation which modifies the type of system $T:
\mathcal{S}_1 \to \mathcal{S}_2$ is reversible if there is another
transformation which modifies the type of system in the opposite direction $F:
\mathcal{S}_2 \to \mathcal{S}_1$ such that $F\circ T = I_1$, where $I_1$ is the
identity transformation on $\mathcal{S}_1$. The following lemma establishes some
properties of this type of transformation.

\begin{lemma}\label{l1}
If the linear maps $T: \mathbb{R}^{k_1} \to \mathbb{R}^{k_2}$ and $F:
\mathbb{R}^{k_2} \to \mathbb{R}^{k_1}$ satisfy the constraints $T(\mathcal{S}_1)
\subseteq \mathcal{S}_2$, $F(\mathcal{S}_2) \subseteq \mathcal{S}_1$ and $F\circ
T = I_1$ then
\begin{enumerate}
	\item $k_1 \leq k_2$,
	\item $T$ is deterministic,
	\item $F$ succeeds with unit probability when the input is any state
$\omega \in T(\mathcal{S}_1) \subseteq \mathcal{S}_2$,
	\item $(T\circ F)(\omega) = I_2 (\omega)$ for all $\omega \in
T(\mathcal{S}_1) \subseteq \mathcal{S}_2$,
	\item If $k_1 = k_2$ then the two state spaces are equivalent
$T(\mathcal{S}_1)= \mathcal{S}_2$.

\end{enumerate}
\end{lemma}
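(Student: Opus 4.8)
The plan is to build everything from two ingredients. The first is that, by minimality of the fiducial outcomes, each $\mathcal{S}_i$ linearly spans its ambient space $\mathbb{R}^{k_i}$; consequently any linear identity that holds on $\mathcal{S}_1$ — in particular $F\circ T = I_1$ — holds as a genuine equality of $k_1\times k_1$ matrices on all of $\mathbb{R}^{k_1}$. (Note that, unlike the square case discussed before the lemma, we cannot upgrade this to $T\circ F = I_2$, since $T$ and $F$ need not be square.) The second ingredient is the non-increasing property of the unit effect under (possibly type-changing) physical maps, established just above: $U_2\circ T \preceq U_1$ and $U_1\circ F \preceq U_2$.

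First I would dispatch part 1: from $F\circ T = I_1$ the map $T$ is injective, so $\mathrm{rank}\,T = k_1$; since $T$ takes values in $\mathbb{R}^{k_2}$ this forces $k_1 \le k_2$. Parts 2 and 3 then follow from a single sandwich: for any $\omega \in \mathcal{S}_1$, chaining the two inequalities with $F(T(\omega)) = \omega$ gives $U_1(\omega) = U_1(F(T(\omega))) \le U_2(T(\omega)) \le U_1(\omega)$, so both inequalities are in fact equalities. The equality $U_2(T(\omega)) = U_1(\omega)$ for all $\omega \in \mathcal{S}_1$, i.e.\ $U_2\circ T = U_1$ after extending by the spanning property, is precisely the statement that $T$ is deterministic (part 2); and the equality $U_1(F(\omega)) = U_2(\omega)$, read for states of the form $\omega \in T(\mathcal{S}_1)$, says $F$ preserves the norm of such states, i.e.\ it succeeds with unit probability on them (part 3). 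Part 4 is then immediate from associativity: for $\omega = T(\sigma)$ with $\sigma \in \mathcal{S}_1$, $(T\circ F)(\omega) = T((F\circ T)(\sigma)) = T(\sigma) = \omega$.

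For part 5, assume $k_1 = k_2$. By part 1, $T$ is then an injective square matrix, hence invertible, and $F\circ T = I_1$ forces $F = T^{-1}$; so the hypothesis $F(\mathcal{S}_2) \subseteq \mathcal{S}_1$ becomes $T^{-1}(\mathcal{S}_2) \subseteq \mathcal{S}_1$, and applying the bijection $T$ yields $\mathcal{S}_2 \subseteq T(\mathcal{S}_1)$, which together with $T(\mathcal{S}_1) \subseteq \mathcal{S}_2$ gives $T(\mathcal{S}_1) = \mathcal{S}_2$. I do not anticipate a genuine obstacle in any of this; the only points requiring care are the systematic use of the spanning property to promote identities on $\mathcal{S}_i$ to matrix identities, and keeping straight that only the one-sided relation $F\circ T = I_1$ is assumed — which is exactly why parts 3 and 4 must be restricted to the image $T(\mathcal{S}_1)$ rather than stated on all of $\mathcal{S}_2$.
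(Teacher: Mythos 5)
Your proof is correct and follows essentially the same route as the paper's: a rank argument for $k_1\le k_2$, the sandwich $U_1 = U_1\circ F\circ T \preceq U_2\circ T \preceq U_1$ using the monotonicity of the unit effect to get determinism of $T$ and of $F$ on $T(\mathcal{S}_1)$, and invertibility of the square matrices for part 5. The only difference is cosmetic ordering: the paper obtains part 4 by restriction first and then part 3 by the same sandwich trick, while you extract part 3 directly from the one sandwich and get part 4 by associativity.
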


\begin{proof}
Suppose that $k_1 > k_2$. Then the matrix $F\circ T$ is not full-rank, and
cannot be equal to the identity. Therefore $k_1 \leq k_2$. As shown above, the
premises of the lemma imply that $U_2 \circ T \preceq U_1$ and $U_1 \circ F
\preceq U_2$. This together with $I_1 = F\circ T$ gives $U_1 = U_1 \circ F \circ
T \preceq U_2 \circ T \preceq U_1$, which implies $U_2 \circ T= U_1$, or in
other words: $T$ is deterministic.
Another consequence of the reversibility premise $F\circ T = I_1$ is that when
restricted to the subset $\omega\in T(\mathcal{S}_1) \subseteq \mathcal{S}_2$ we
have $T\circ F |_{T(\mathcal{S}_1)}  = I_2 |_{T(\mathcal{S}_1)}$, which for the
same reason as above, it implies  $U_1 \circ F |_{T(\mathcal{S}_1)} = U_2
|_{T(\mathcal{S}_1)}$, or in other words: the transformation $F$ is
deterministic when restricted to states $\omega \in T(\mathcal{S}_1) \subseteq
\mathcal{S}_2$.

Finally, let us consider the case $k_1 = k_2$. The square matrices $T$ and $F$
are respective inverses ($T=F^{-1}$ and $F= T^{-1}$) and hence bijective. This
together with $T(\mathcal{S}_1) \subseteq \mathcal{S}_2$ and $T^{-1}
(\mathcal{S}_2) \subseteq \mathcal{S}_1$ implies that $T(\mathcal{S}_1) =
\mathcal{S}_2$ and $F(\mathcal{S}_2) = \mathcal{S}_1$.
\end{proof}

\subsection{Composite systems}\label{cs}

To a setup as the one appearing in Fig.~\ref{f1} we associate a
system if, for each configuration of the preparation,
transformation, and measurement devices, the relative frequencies
of the outcomes tend to a unique probability distribution. Two
systems $A,B$ constitute a composite system $AB$ if a measurement
for $A$ together with a measurement for $B$ uniquely specifies a
measurement for $AB$, independently of the temporal ordering. The
fact that subsystems are systems themselves implies that each
global state $\omega_{AB}$ has well-defined reduced states
$\omega_{A}, \omega_B$ which do not depend on which
transformations and measurements are performed on the other
subsystem; this is often referred to as no-signaling. Some
bipartite correlations satisfying the no-signaling constraint
violate Bell inequalities more than QT does~\cite{PR}; however, as
we will show, these are incompatible with our postulates.

\begin{figure}
    \centering
    \includegraphics[height=3cm]{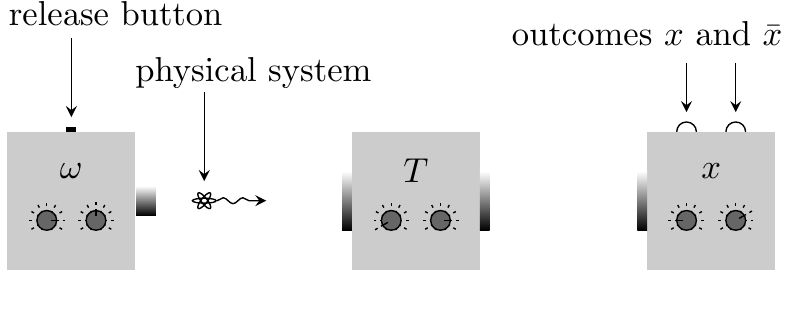}
    \caption{{\bf General experimental setup.} From left to right there are
the preparation, transformation and measurement devices. As soon
as the release button is pressed, the preparation device outputs a
physical system in the state specified by its knobs. The next
device performs the transformation specified by its knobs (which
in particular can be ``do nothing''). The device on the right
performs the measurement specified by its knobs, and the outcome
\mbox{($x$ or $\bar{x}$)} is indicated by the corresponding
light.}
    \label{f1}
\end{figure}

A bipartite system is also a system, so its states can be
represented by the probabilities of some fiducial outcomes. But
what is the relationship between these and the fiducial outcomes
of the subsystems, $x_1, \ldots, x_{k_A}$ and $y_1, \ldots,
y_{k_B}$? In order to answer this question, we point out that the
fact that $p(x,y)$ does not depend on the ordering of the
measurements giving outcomes $x,y$ implies the following

\begin{lemma}\label{lemma2}
The joint probability $p(x,y)$ of any pair of subsystem outcomes $x,y$ is given
by
\begin{equation}\label{prob rule}
    p(x,y)= (E_x \otimes E_y) \cdot\omega_{AB}\ ,
\end{equation}
where
\begin{equation}\label{psiAB}
    \omega_{AB} = \left[ \begin{array}{c}
    p(x_1, y_1)\\ p(x_1, y_2)\\ \vdots\\ p(x_{k_A}, y_{k_B})
    \end{array} \right]\ \in\
    \mathbb{R}^{k_A} \otimes \mathbb{R}^{k_B}\ .
\end{equation}
Product states and the set of all these vectors $\omega_{AB}$ span
the vector space $\mathbb{R}^{k_A} \otimes \mathbb{R}^{k_B}$.
\end{lemma}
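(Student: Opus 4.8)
The plan is to establish the tensor-product rule~\eqref{prob rule} by exploiting the linearity of effects on each subsystem together with the no-signaling structure, and then to obtain the spanning statement by a dimension count. First I would fix an arbitrary global state $\omega_{AB}$ and an arbitrary pair of subsystem outcomes $x$ on $A$ and $y$ on $B$. Write $E_x = \sum_i E_x^i \delta_{x_i}$ in the dual basis associated to the fiducial outcomes $x_1,\ldots,x_{k_A}$ of $A$, meaning that $E_x(\omega_A) = \sum_i E_x^i\, p(x_i)$ for every state $\omega_A$ of $A$; such a decomposition exists precisely because $x_1,\ldots,x_{k_A}$ form a fiducial (hence spanning) set for $A$, and similarly $E_y = \sum_j E_y^j \delta_{y_j}$ for $B$. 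The key point is that the \emph{conditional} object obtained from $\omega_{AB}$ by performing the measurement outcome $y_j$ on $B$ and not discarding $A$ is itself a (subnormalized) valid state of $A$ — this is exactly the statement that $A$ is a system in its own right, with reduced states independent of what is done on $B$. Applying $E_x$ to this conditional state of $A$ and using its linearity in the fiducial probabilities $p(x_i,y_j)$ gives $p(x,y_j) = \sum_i E_x^i\, p(x_i,y_j)$; then applying the analogous step on $B$ (linearity of $E_y$ in the already-conditioned probabilities) yields $p(x,y) = \sum_{i,j} E_x^i E_y^j\, p(x_i,y_j) = (E_x\otimes E_y)\cdot \omega_{AB}$, which is~\eqref{prob rule}.

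Next I would verify that the definition~\eqref{psiAB} is consistent, i.e.\ that the listed numbers $p(x_i,y_j)$ really do determine all joint outcome probabilities and that the map $\omega_{AB}\mapsto$ (that vector) is well-defined and linear; but this is essentially immediate from the identity just derived, since every $p(x,y)$ — and by the same argument every joint outcome probability for arbitrary (not necessarily fiducial) measurements on $A$ and $B$, built from multi-outcome measurements via $E_1+\cdots+E_n=U$ — is a fixed linear functional of the vector in~\eqref{psiAB}. Hence the $k_A k_B$ numbers $p(x_i,y_j)$ indeed form a (candidate) fiducial list for $AB$, and the state $\omega_{AB}$ of the composite embeds linearly into $\mathbb{R}^{k_A}\otimes\mathbb{R}^{k_B}$.

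For the final sentence — that product states, together with all vectors $\omega_{AB}$ arising from global states, span $\mathbb{R}^{k_A}\otimes\mathbb{R}^{k_B}$ — I would argue as follows. A product state $\omega_A\otimes\omega_B$ has components $p(x_i)q(y_j)$, and as $\omega_A$ ranges over a spanning set of $\mathbb{R}^{k_A}$ (which exists, since $\mathcal{S}_A$ spans $\mathbb{R}^{k_A}$ by minimality of the fiducial list) and $\omega_B$ over a spanning set of $\mathbb{R}^{k_B}$, the products $\omega_A\otimes\omega_B$ span the whole tensor product $\mathbb{R}^{k_A}\otimes\mathbb{R}^{k_B}$ by the standard fact that $\{v_a\otimes w_b\}$ spans $V\otimes W$ whenever $\{v_a\}$ spans $V$ and $\{w_b\}$ spans $W$. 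Since every product state is in particular a global state of $AB$, the set of all $\omega_{AB}$ already contains a spanning set, which gives the claim. The main obstacle here is conceptual rather than computational: one must be careful that the "conditional state of $A$ given outcome $y_j$ on $B$" is a legitimate, well-defined object of the theory — this is where the definition of a composite system (a measurement on $A$ together with one on $B$ specifies a measurement on $AB$, order-independently) and the resulting no-signaling property are genuinely used, rather than merely invoked. Once that is granted, the rest is linear algebra.
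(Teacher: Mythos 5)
Your derivation of the product rule \eqref{prob rule} is essentially the paper's own argument: condition on a fiducial outcome $y_j$ of $B$, note that the conditional object is a legitimate state of $A$ (this is where the order-independent composite structure, i.e.\ no-signaling, enters), apply the single-system rule linearly, and repeat with the roles of $A$ and $B$ exchanged. Working with subnormalized conditional states rather than dividing and re-multiplying by $p(y_j)$ and $p(x)$ as the paper does is a harmless (indeed slightly cleaner) cosmetic variant of the same proof. The dimension count at the end --- $k_A$ linearly independent states of $A$ tensored with $k_B$ linearly independent states of $B$, both existing by minimality of the fiducial lists --- is also exactly the paper's.

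The one place where your argument is thinner than the paper's is the sentence ``since every product state is in particular a global state of $AB$.'' In this framework the composite is \emph{defined} through joint measurements, not joint preparations, so the inclusion $\mathcal{S}_A\otimes\mathcal{S}_B\subseteq\mathcal{S}_{AB}$ is not definitional; it is precisely the nontrivial half of the paper's proof of the spanning claim. The paper argues it as follows: every local state must arise as the marginal of some global state (otherwise that local state could not exist in the presence of the other system), and a bipartite state whose marginal is pure is necessarily a product state (the quantum proof of this fact carries over to GPTs); hence for pure $\omega_A,\omega_B$ the state $\omega_A\otimes\omega_B$ lies in $\mathcal{S}_{AB}$, and convexity then yields all product states. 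Alternatively you could justify your assertion operationally, by arguing that independent preparations of $A$ and $B$ are possible and produce uncorrelated statistics $p(x_i,y_j)=p(x_i)q(y_j)$, but either way the step needs an argument rather than being taken as given; as stated, your proof assumes the very fact on which the spanning conclusion rests.
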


\begin{proof}
If the system $B$ is measured first, giving outcome $y_j$, then
the system $A$ is in the state determined by the fiducial
probabilities $p(x_i|y_j)= p(x_i, y_j)/p(y_j)$, and the
single-system probability rule can be applied $p(x|y_j)= \sum_i
E^i_x\, p(x_i|y_j)$. Multiplying by $p(y_j)/ p(x)$ and using
Bayes' rule gives
\[
   p(y_j |x)= \sum_i E^i_x\, p(x_i,y_j) /p(x).
\]
By using the freedom in the ordering of measurements, we can
interpret $p(y_j |x)$ as the state of the system $B$ once the
system $A$ has been measured giving outcome $x$, and the
single-system probability rule can be applied again: $p(y_j|x)=
\sum_j E^j_y\, p(y_j |x)= \sum_{i,j} E^i_x\, E^j_y\, p(x_i,y_j)
/p(x)$. Multiplying both sides of this equality by $p(x)$ gives
(\ref{prob rule}).

Let us see that the vectors $\omega_{AB} \in \mathcal{S}_{AB}$ span the full
tensor product space. In QT, the only states $\omega_{AB} \in \mathcal{S}_{AB}$
which
have pure states as marginals $\omega_A\in \mathcal{S}_A ,
\omega_B \in \mathcal{S}_B$, are product ones $\omega_{AB}=
\omega_A \otimes \omega_B$. The same proof technique applies to
generalized probability theories.
This implies that
$\mathcal{S}_{AB}$ contains all product states, otherwise there
would be a state in $\mathcal{S}_A$ or $\mathcal{S}_B$ which is
not the marginal of any state in $\mathcal{S}_{AB}$.
Next, note that by minimality, $\mathcal{S}_A$ contains $k_A$
linearly independent vectors, and analogously for $\mathcal{S}_B$.
The tensor products of these vectors are a set of $k_{AB}= k_A
k_B$ linearly independent vectors in $\mathcal{S}_{AB}$, so the
set $\mathcal{S}_{AB}$ has full dimension.
\end{proof}

And what about global measurements? The postulate of Tomographic
Locality states that the probability for the outcome of any
measurement, local or global, is determined by the joint
probability $p(x,y)$ of all local measurements. This implies that
$\omega_{AB}$ in~(\ref{psiAB}) constitutes a complete
representation of a bipartite state, since all outcome
probabilities can be calculated from it. Hence, the linear span of
$\mathcal{S}_{AB}$ is $\mathbb{R}^{k_A k_B}$, which implies that
dimensions follow a multiplicative rule.
\begin{equation}\label{pr}
    k_{AB} = k_A k_B\ .
\end{equation}
From now on, we use this tensor-product representation given by
Eqs. (\ref{prob rule}) and (\ref{psiAB}) for bipartite states. In
this representation, the marginal states are given by $\omega_A =
(\id \otimes U) (\omega_{AB})$ and $\omega_B = (U \otimes \id)
(\omega_{AB})$. For a given pair of states spaces
$\mathcal{S}_{A}, \mathcal{S}_{B}$ the composite state space
$\mathcal{S}_{AB}$ is not unique in general.
The only consistency constraints on $\mathcal{S}_{AB}$ are:
\begin{enumerate}
    \item $\mathcal{S}_{AB}$ must contain the set of separable states, that is
the convex hull of $\mathcal{S}_{A} \otimes \mathcal{S}_{B}$,

    \item all states $\omega\in \mathcal{S}_{AB}$ must give valid probabilities
$(E_x \otimes E_y)(\omega) \in [0,1]$ for all local measurements $x,y$.
\end{enumerate}

\section{Statement of the postulates}

Next, using the formalism developed in the previous section, we rewrite our
postulates in a more formal way. But before that, let us introduce some
notation. We denote by ${\cal S}_{\rm gbit}$ the (yet unknown) state space of a
gbit, and by $k_{\rm gbit}$ its corresponding dimension. Also, we denote by
${\cal S}_{\rm gbit}^n$ the state space of $n$ gbits, and $k^{(n)}_{\rm gbit}$
its corresponding dimension.

\begin{ppostulate}[\textbf{Continuous Reversibility}]
    In any system, the group of  transformations $\mathcal{G}$
 generated by time-continuous reversible dynamics is transitive on
the set of pure states $\mathrm{ext} \mathcal{N}$.
\end{ppostulate}

\begin{ppostulate}[\textbf{Tomographic Locality}]
    The state of a composite system is completely
characterized by the correlations of measurements on the
individual components: $p(x,y)$ for all local outcomes $x,y$.
\end{ppostulate}

\begin{ppostulate}[\textbf{Existence of an Information Unit}]
    There is a type of system, \textit{the gbit},
which satisfies the following:
\begin{enumerate}
\setcounter{enumi}{-1}
    \item For each state space $\mathcal{S}$ there is a
number $n$ and two physical transformations, $T: \mathcal{S} \to {\cal S}_{\rm
gbit}^n$ and $F: {\cal S}_{\rm gbit}^n \to \mathcal{S}$, such that $T\circ F =
I$.

  \item The state space of a gbit ${\cal S}_\mathrm{gbit}$ has finite dimension
$k_\mathrm{gbit}$.

    \item All effects on a gbit correspond to measurement outcomes.

       \item The group of transformations generated by time-continuous
reversible
dynamics of two gbits ${\cal G}_{\rm gbit}^2$ contains an element
which is not product ($G_{AB} \neq G_A \otimes G_B$).
\end{enumerate}
\end{ppostulate}

\begin{ppostulate}[\textbf{No Simultaneous Encoding}] If there are four gbit
states $\omega_{a,a\rq{}} \in \mathcal{S} _{\rm gbit}$ (with $a,a\rq{} \in
\{0,1\}$) and an effect $E$ such that $E(\omega_{a,a\rq{}})= \delta_{a,0}$, then
any effect $E\rq{}$ satisfies $E\rq{} (\omega_{a,0}) = E\rq{} (\omega_{a,1})$.
\end{ppostulate}

We have seen that Postulate~2 implies equation~(\ref{pr}), hence $k^{(n)}_{\rm
gbit} = k_{\rm gbit}^n$. Postulate~3.0 provides the premises of Lemma~\ref{l1},
then the results of this lemma follow too. In particular, the number of gbits
$n$ has to be sufficiently large for $k^n_{\rm gbit} \geq k$ to hold, where $k$
is the dimension of the arbitrary state space $\mathcal{S}$. But the main
consequence of Postulate~3.0 is that we only need to characterize the state
spaces ${\cal S}_{\rm gbit}^n$; and once this is done, we know that any state
space $\mathcal{S}$ compatible with our Postulates must be a subspace of ${\cal
S}_{\rm gbit}^n$ for some value of $n$. And, as already discussed, this does not
only establish a correspondence between the states of $\mathcal{S}$ and the
states in the subspace of ${\cal S}_{\rm gbit}^n$, but also a correspondence
between the measurements and the transformations which keep the subspace of
${\cal S}_{\rm gbit}^n$ invariant.

In the next Section we show that the only possible state space ${\cal
S}_\mathrm{gbit}^n$ compatible with our postulates is the set of
$2^n$-dimensional quantum density matrices
\begin{equation}\label{nqs}
	S^{2^n}_\mathrm{QT} = \left\{ \rho \in \mathbb{C}^{2^n \times 2^n} |\
\rho^\dagger =\rho,\ \mathrm{tr}\rho \leq 1,\ \rho \geq 0\ \right\}
\ ,
\end{equation}
with associated set of effects $\rho \mapsto \mathrm{tr} M\rho$, where $M$ is
any $2^n$-dimensional, complex matrix satisfying $0 \leq M \leq \id$; and group
of reversible transformations $\rho \mapsto U\rho U^\dagger$, for all $U\in
\mathrm{SU}(2^n)$. In other words, gbits are quantum two-level systems (or
qubits), and they combine into composite systems in exactly the way prescribed
by QT. Thus, our postulates single out all state spaces ${\cal S}$ that can be
simulated on $n$-qubit systems, that is, $k$-level quantum systems for some $k
\in \mathbb{N}$, and quantum systems with linear constraints on the density
matrix elements, such as classical systems or systems with superselection rules.

\section{Proof of equivalence}

\subsection{A single gbit}\label{asg}

\begin{lemma}
    Postulate~3.2 and Postulate~4 imply that there are no mixed states in the
boundary of $\mathcal{N} _\mathrm{gbit}$.
\end{lemma}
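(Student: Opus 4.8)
The plan is to argue by contradiction, formalizing the picture in Figure~2. Suppose some $\omega_{\mathrm{mix}}$ lies on the boundary of $\mathcal{N}_\mathrm{gbit}$ (within its affine hull, the hyperplane $\{U=1\}$) and is not pure, so that $\omega_{\mathrm{mix}}=\tfrac12(\omega_1+\omega_2)$ with $\omega_1\neq\omega_2$ in $\mathcal{N}_\mathrm{gbit}$. First I would invoke the supporting hyperplane theorem to obtain a linear functional $f$ on $\mathbb{R}^{k_\mathrm{gbit}}$, nonconstant on $\mathcal{N}_\mathrm{gbit}$, with $f(\omega)\le c:=f(\omega_{\mathrm{mix}})$ for all normalized $\omega$. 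Since $c=\tfrac12\big(f(\omega_1)+f(\omega_2)\big)$ with both terms $\le c$, necessarily $f(\omega_1)=f(\omega_2)=c$, so the exposed face $\mathcal{F}=\{\omega\in\mathcal{N}_\mathrm{gbit}:f(\omega)=c\}$ is nontrivial, containing the distinct states $\omega_1,\omega_2$. Setting $c':=\min_{\omega\in\mathcal{N}_\mathrm{gbit}}f(\omega)$ (finite, since $\mathcal{N}_\mathrm{gbit}$ is compact, and strictly below $c$ because $f$ is nonconstant there), attained at some $\omega_0$, I would then define $E:=(f-c'U)/(c-c')$. On normalized states $E(\omega)=(f(\omega)-c')/(c-c')\in[0,1]$, and since $E(\mathbf 0)=0$ and $\mathcal{S}_\mathrm{gbit}=\mathrm{conv}(\mathcal{N}_\mathrm{gbit}\cup\{\mathbf 0\})$, $E$ is a genuine effect, with $E\equiv 1$ on $\mathcal{F}$ and $E(\omega_0)=0$; by All Effects Are Observable (Postulate~3.2) it is moreover an available decoding measurement.

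Next I would feed this into No Simultaneous Encoding. Take as encoding states $\omega_{0,0}:=\omega_1$, $\omega_{0,1}:=\omega_2$ and $\omega_{1,0}:=\omega_{1,1}:=\omega_0$ (Postulate~4 does not require the four states to be distinct, and all four lie in $\mathcal{S}_\mathrm{gbit}$). Then $E(\omega_{a,a'})=\delta_{a,0}$, so Postulate~4 forces $E'(\omega_{0,0})=E'(\omega_{0,1})$ for \emph{every} effect $E'$. But the effects separate the points of $\mathcal{S}_\mathrm{gbit}$ --- already the fiducial-coordinate functionals $\omega\mapsto p(x_i)$ are effects and distinguish $\omega_1$ from $\omega_2$ --- so this yields $\omega_1=\omega_2$, contradicting our choice. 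Hence no mixed state sits on the boundary of $\mathcal{N}_\mathrm{gbit}$, i.e.\ every boundary state of $\mathcal{N}_\mathrm{gbit}$ is pure.

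I expect the only real work to be the convexity bookkeeping in the first step: checking that a supporting functional at a \emph{relative} boundary point is nonconstant on $\mathcal{N}_\mathrm{gbit}$ (using that $\mathcal{N}_\mathrm{gbit}$ has full dimension $k_\mathrm{gbit}-1$ in $\{U=1\}$, so $c-c'>0$ and $E$ is well defined), that $E$ takes values in $[0,1]$ on all of $\mathcal{S}_\mathrm{gbit}$ and not merely on $\mathcal{N}_\mathrm{gbit}$, and that the exposed face genuinely contains two distinct states --- each of which relies on $\omega_{\mathrm{mix}}$ being simultaneously mixed and on the boundary. None of this is conceptually hard, but it is where a careful write-up has to be precise; once it is in place, Postulate~4 together with the point-separating property of effects closes the argument immediately.
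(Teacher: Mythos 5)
Your proof is correct and follows essentially the same route as the paper's: produce an effect equal to $1$ on a nontrivial face containing the mixed boundary state (hence, by linearity, on both $\omega_1,\omega_2$) and equal to $0$ on some other state, invoke Postulate~3.2 for its measurability, and then contradict Postulate~4 via the encoding $\omega_{0,0}=\omega_1$, $\omega_{0,1}=\omega_2$, $\omega_{1,0}=\omega_{1,1}=\omega'$ together with a point-separating effect $E'$. The only difference is that you spell out, via the supporting-hyperplane theorem and the rescaling $E=(f-c'U)/(c-c')$, the existence of the effect that the paper simply asserts.
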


\begin{proof}
Suppose the mixed state $\omega_\mathrm{mix}= q \omega_1 + (1-q)
\omega_2$ is in the boundary of $\mathcal{S}_\mathrm{gbit}$. Then,
there exists an effect $E$ with $E(\omega_\mathrm{mix}) =1$ and
$E(\omega\rq{})=0$ for some other state $\omega\rq{} \in
\mathcal{S} _\mathrm{gbit}$. According to Postulate~3.2 this
effect is in principle measurable. Moreover, the linearity of $E$
together with the property $E(\mathcal{S} _\mathrm{gbit}) \in
[0,1]$ imply that $E(\omega_1)= E(\omega_2)= 1$. Therefore, we can
encode $a=0$ in $\omega_1$ or $\omega_2$, and $a=1$ in
$\omega\rq{}$. Additionally, we can encode $a\rq{}=0$ in
$\omega_1$ and $a\rq{}=1$ in $\omega_2$. Since $\omega_1 \neq
\omega_2$, there is an effect $E\rq{}$ for which $E\rq{}
(\omega_1) \neq E\rq{} (\omega_2)$. By relabeling $\omega_{0,0}=
\omega_1$, $\omega_{0,1}= \omega_2$ and $\omega_{1,0}=
\omega_{1,1}= \omega\rq{}$ we obtain a contradiction with
Postulate~4.
\end{proof}
Figure~2 contains a pictorial representation of the above proof.

\begin{lemma}\label{l3}
    Continuous Reversibility together with the fact that
$\mathcal{N} _\mathrm{gbit}$ has no mixed states in its boundary
imply that $\mathcal{N} _\mathrm{gbit}$ is a solid ellipsoid.
\end{lemma}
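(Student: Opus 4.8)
The plan is to show that the convex body $\mathcal{N}_{\rm gbit}$, which has dimension $d = k_{\rm gbit}-1$, is an affine image of a Euclidean ball. The two ingredients are: (i) Continuous Reversibility supplies a compact connected Lie group $\mathcal{G}_{\rm gbit}$ acting linearly on $\mathbb{R}^{k_{\rm gbit}}$ that is transitive on $\mathrm{ext}\,\mathcal{N}_{\rm gbit}$; (ii) the boundary $\partial\mathcal{N}_{\rm gbit}$ consists entirely of pure states. The first standard move is to average any inner product on $\mathbb{R}^{k_{\rm gbit}}$ over the Haar measure of $\mathcal{G}_{\rm gbit}$, producing a $\mathcal{G}_{\rm gbit}$-invariant inner product. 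It is convenient to pass to the $(k_{\rm gbit}-1)$-dimensional affine hyperplane containing $\mathcal{N}_{\rm gbit}$ (the hyperplane $U(\omega)=1$), noting that $\mathcal{G}_{\rm gbit}$ preserves $U$ since reversible transformations are deterministic; hence $\mathcal{G}_{\rm gbit}$ acts by affine isometries of this hyperplane fixing the barycenter of $\mathcal{N}_{\rm gbit}$ (the barycenter is $\mathcal{G}_{\rm gbit}$-invariant, being canonically attached to the group-invariant body), so after translating the barycenter to the origin we get an orthogonal action of $\mathcal{G}_{\rm gbit}$ on $\mathbb{R}^d$ that is transitive on the extreme points of $\mathcal{N}_{\rm gbit}$.

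Next I would invoke transitivity to conclude that all extreme points lie on a single sphere, namely at a common distance $r$ from the origin (orthogonal maps preserve norm, and every pure state is the image of any fixed pure state). So $\mathrm{ext}\,\mathcal{N}_{\rm gbit}$ is contained in the sphere $S_r$ of radius $r$. The key step is the reverse inclusion: I claim every point of $S_r$ is an extreme point, equivalently the whole sphere lies in $\partial\mathcal{N}_{\rm gbit}$. Here is where hypothesis (ii) does the work. Since $\mathcal{N}_{\rm gbit}$ is the convex hull of its extreme points and these sit on $S_r$, we have $\mathcal{N}_{\rm gbit}\subseteq \overline{B_r}$, the closed ball of radius $r$. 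Conversely, suppose some point $p\in S_r$ is \emph{not} in $\mathcal{N}_{\rm gbit}$; then there is a supporting hyperplane of $\mathcal{N}_{\rm gbit}$ separating $p$ from it, which meets $\overline{B_r}$ only near a shallow spherical cap, and one shows the extreme points of $\mathcal{N}_{\rm gbit}$ (all on $S_r$) must avoid a neighborhood of $p$ — but by transitivity of $\mathcal{G}_{\rm gbit}$, which acts on $S_r$, the orbit of a single pure state cannot avoid any open subset of $S_r$ unless the orbit is lower-dimensional; and if the orbit were a proper submanifold of $S_r$, its convex hull would have empty interior in $\mathbb{R}^d$, contradicting that $\mathcal{N}_{\rm gbit}$ has full dimension $d$ (minimality of the fiducial set). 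Therefore the orbit is dense in $S_r$, hence (being compact) equals $S_r$, so $\mathrm{ext}\,\mathcal{N}_{\rm gbit}=S_r$ and $\mathcal{N}_{\rm gbit}=\overline{B_r}$, a Euclidean ball in the invariant inner product. Undoing the affine identification, $\mathcal{N}_{\rm gbit}$ is a solid ellipsoid.

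The main obstacle is the step showing that the $\mathcal{G}_{\rm gbit}$-orbit of a pure state is \emph{all} of the sphere $S_r$, rather than merely contained in it — i.e. upgrading "the extreme points lie on a sphere and fill an orbit" to "the extreme points are the whole sphere." The subtlety is that a compact connected group can act transitively on a proper submanifold of a sphere (e.g. $\mathrm{SO}(3)$ acting on a circle inside $S^2$), so transitivity on pure states alone is not enough; one genuinely needs the input that $\partial\mathcal{N}_{\rm gbit}$ is \emph{entirely} pure, together with full-dimensionality, to rule out the orbit being a thin submanifold whose convex hull is flat. I expect the cleanest argument to run through the observation that if $x\in\partial\mathcal{N}_{\rm gbit}$ then $x$ is extreme, so $\partial\mathcal{N}_{\rm gbit}\subseteq\mathrm{ext}\,\mathcal{N}_{\rm gbit}\subseteq S_r$; combined with $\mathcal{N}_{\rm gbit}\subseteq\overline{B_r}$ this forces $\partial\mathcal{N}_{\rm gbit}=\partial\overline{B_r}=S_r$ (a full-dimensional convex body contained in a ball and sharing part of its boundary-sphere, with its \emph{entire} boundary on that sphere, must be the ball), and the ellipsoid conclusion is immediate without any delicate orbit analysis.
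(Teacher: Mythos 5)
Your proof is correct and follows essentially the same route as the paper: Haar-averaging over $\mathcal{G}_\mathrm{gbit}$ to obtain an invariant quadratic form, so that transitivity places all pure states on a common sphere (ellipsoid in the original coordinates), and then using the all-pure boundary to conclude that the full-dimensional convex body $\mathcal{N}_\mathrm{gbit}$ is the entire solid ellipsoid --- the paper implements this with the matrix $W$ acting on $\mathbb{R}^{k_\mathrm{gbit}}$ and the hyperplane $U\cdot x=1$ rather than by centering at the barycenter, a cosmetic difference. One caution: the orbit-density detour in your middle paragraph is both unnecessary and unsound as stated, since a lower-dimensional orbit can have full-dimensional convex hull (e.g.\ a flat torus orbit of $\mathrm{SO}(2)\times\mathrm{SO}(2)$ in $S^3\subset\mathbb{R}^4$), but your final-paragraph argument ($\partial\mathcal{N}_\mathrm{gbit}\subseteq\mathrm{ext}\,\mathcal{N}_\mathrm{gbit}\subseteq S_r$ together with $\mathcal{N}_\mathrm{gbit}\subseteq\overline{B_r}$ and full dimensionality forces $\mathcal{N}_\mathrm{gbit}=\overline{B_r}$) is exactly the step the paper's proof relies on, so that detour can simply be deleted.
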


\begin{proof}
Using the Haar measure on the compact connected Lie group
$\mathcal{G} _\mathrm{gbit}$, we can define a positive matrix
\begin{equation}\label{W}
    W^2= \int_{\mathcal{G} _\mathrm{gbit}}
    \hspace{-5mm} dG\ G\t G  \ ,
\end{equation}
and $W$ as its unique positive square root. Note that $W\t =W$ and
$W^2 G^{-1} = G\t W^2$ for all $G\in \mathcal{G} _\mathrm{gbit}$.
According to Continuous Reversibility, for any pair of pure states
$\omega_1, \omega_2 \in \mathcal{S} _\mathrm{gbit}$ we have
$\omega_2 = G\omega_1$ for some $G\in \mathcal{G} _\mathrm{gbit}$,
and hence
\begin{eqnarray*}
    |W\omega_2| &=& \sqrt{\omega_2 \cdot W^2 \omega_2}=
    \sqrt{\omega_1 \cdot G\t W^2 G \omega_1}
\\
    &=& \sqrt{\omega_1 \cdot W^2 \omega_1 }= |W\omega_1|\ ,
\end{eqnarray*}
where the notation $\omega_1 \cdot \omega_2$ is used to denote the
Euclidean inner product, while, accordingly, $|\omega_1|=
\sqrt{\omega_1 \cdot \omega_1}$ stands for the Euclidean norm.

This allows us to define the constant $r= |W\omega|$, where
$\omega \in \mathcal{S} _\mathrm{gbit} $ is  a pure state. Note
that $r$ is independent of the chosen pure state $\omega$. The set
$\mathcal{E}= \{x\in\mathbb{R} ^{k_\mathrm{gbit}}|  r= |Wx|\}$ is
an ellipsoid, and the intersection of $\mathcal{E}$ and the
normalization hyperplane $\mathcal{F}= \{x\in\mathbb{R}
^{k_\mathrm{gbit}} | U\!\cdot x=1\}$ is also an ellipsoid. The
pure states of $\mathcal{N} _\mathrm{gbit}$ are contained in the
intersection $\mathcal{E} \cap \mathcal{F}$, and since there are
no mixed states in the boundary of $\mathcal{N} _\mathrm{gbit}$,
the set of pure states $\mathrm{ext} \mathcal{N} _\mathrm{gbit}$
must be $\mathcal{E} \cap \mathcal{F}$, which is a
$(k_\mathrm{gbit}-1)$-dimensional ellipsoid.
\end{proof}

\begin{lemma}[\textbf{Bloch-vector representation}]\label{l4}
Postulates~1, 3.2 and 4 imply the existence of a representation where the state
space of a gbit is
\begin{equation}
\label{sb}
    \mathcal{S}_\mathrm{gbit} =
    \left\{
    u \left[ \begin{array}{c}
        1 \\ \hat \omega
    \end{array} \right]
    |\  u\in [0,1],\ \hat \omega \in \mathbb{R}^d ,\ |\hat \omega| \leq 1
    \right\},
\end{equation}
the normalization effect is $U= [1, \hat{\mathbf{0}}]$, and the
group of transformations generated by time-continuous dynamics is
\begin{equation}
\label{gb}
    \mathcal{G}_\mathrm{gbit} =
    \left\{
    \left[ \begin{array}{cc}
        1 & \mathbf{0} \\
        \mathbf{0} & \hat G
    \end{array} \right]
    |\  \hat G \in \hat{\mathcal{G}} _\mathrm{gbit}
    \right\},
\end{equation}
where $\hat{\mathcal{G}} _\mathrm{gbit}$ is a compact connected subgroup
of $\mathrm{SO} (d)$ which is transitive in the unit sphere of
$\mathbb{R}^d$, and $d= k_\mathrm{gbit} -1 \geq 2$.
\end{lemma}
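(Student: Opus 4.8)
The strategy is to feed the two preceding lemmas into an explicit change of fiducial outcomes. By the first lemma of this subsection, Postulates~3.2 and~4 forbid mixed states in the boundary of $\mathcal{N}_\mathrm{gbit}$, and then by Lemma~\ref{l3} together with Postulate~1 the set $\mathcal{N}_\mathrm{gbit}$ is a solid ellipsoid of dimension $d=k_\mathrm{gbit}-1$ whose boundary is exactly $\mathrm{ext}\,\mathcal{N}_\mathrm{gbit}$. Since a state space is only defined up to an invertible linear map (one is free to redefine the fiducial outcomes), it is enough to produce one representative of the right shape.

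First I would normalize the geometry. Write $c\in\mathcal{N}_\mathrm{gbit}$ for the center of the ellipsoid and $a_1,\dots,a_d$ for its principal semi-axis vectors, so that $\mathcal{N}_\mathrm{gbit}=\{c+\sum_i t_i a_i : \sum_i t_i^2\le 1\}$. Because $U\cdot c=1$ while $U\cdot a_i=0$, the vectors $c,a_1,\dots,a_d$ form a basis of $\mathbb{R}^{k_\mathrm{gbit}}$, so the invertible linear map $L$ with $Lc=[1,\hat{\mathbf{0}}]\t$ and $La_i=[0,e_i]\t$ (with $e_i$ the $i$-th standard basis vector of $\mathbb{R}^d$) is well defined and sends $\mathcal{N}_\mathrm{gbit}$ onto the unit ball $\{[1,\hat\omega]\t:|\hat\omega|\le 1\}$. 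As $\mathcal{S}_\mathrm{gbit}$ is the convex hull of $\mathcal{N}_\mathrm{gbit}$ and $\mathbf{0}$ and $L$ fixes $\mathbf{0}$, this yields~(\ref{sb}). The new unit effect $U'=U\circ L^{-1}$ is linear and takes the value $1$ on the whole ball; evaluating at $\hat\omega=\hat{\mathbf{0}}$ gives its first component, and evaluating on differences of ball points kills the rest, so $U'=[1,\hat{\mathbf{0}}]$.

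Next I would pin down the group. In this representation $\mathcal{G}_\mathrm{gbit}$ is still a compact connected matrix group, and each $G\in\mathcal{G}_\mathrm{gbit}$ is reversible, hence deterministic, so $U'\circ G=U'$; equivalently the first row of $G$ is $[1,\hat{\mathbf{0}}]$ and $G[1,\hat\omega]\t=[1,\mathbf{c}_G+\hat G\hat\omega]\t$ for some $\mathbf{c}_G\in\mathbb{R}^d$ and $d\times d$ block $\hat G$. Since $G$ is a bijection of $\mathcal{S}_\mathrm{gbit}$ preserving $U'$, the affine map $\hat\omega\mapsto\mathbf{c}_G+\hat G\hat\omega$ maps the closed Euclidean unit ball bijectively onto itself; such a map preserves the centroid of the ball, which is the origin, forcing $\mathbf{c}_G=\mathbf{0}$. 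Then $\hat G$ is a linear bijection of the unit ball, hence an isometry, i.e.\ $\hat G\in\mathrm{O}(d)$, and connectedness of $\mathcal{G}_\mathrm{gbit}$ upgrades this to $\hat G\in\mathrm{SO}(d)$; this gives~(\ref{gb}) with $\hat{\mathcal{G}}_\mathrm{gbit}$ a compact connected subgroup of $\mathrm{SO}(d)$.

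It remains to record transitivity and the dimension bound. The extreme points of $\mathcal{N}_\mathrm{gbit}$ are precisely the sphere $\{[1,\hat\omega]\t:|\hat\omega|=1\}$, so Postulate~1 says $\hat{\mathcal{G}}_\mathrm{gbit}$ acts transitively on the unit sphere of $\mathbb{R}^d$. If $d=1$ then $\mathrm{SO}(1)$ is trivial and cannot be transitive on the two-element sphere $S^0$, contradicting Postulate~1, and $d=0$ would make the gbit a trivial single-state system, unfit to serve as an information unit; hence $d\ge 2$. I expect no real difficulty here: the substantive steps---excluding mixed boundary states and recognizing the ellipsoid---are already carried out in the two preceding lemmas, so what is left is mostly bookkeeping, the one slightly delicate point being the centroid argument that removes the translational part $\mathbf{c}_G$ and thereby embeds $\mathcal{G}_\mathrm{gbit}$ into the orthogonal (rather than affine) group of the ball.
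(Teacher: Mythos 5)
Your proof is correct, but it takes a route that differs in mechanics from the paper's. The paper reuses the Haar-averaged matrix $W$ and the constant $r$ from the proof of Lemma~\ref{l3}: it conjugates by $L=\sqrt{2}\,r^{-1}W$ so that the whole group $\mathcal{G}_\mathrm{gbit}$ becomes orthogonal, then obtains the block form $\mathrm{diag}(1,\hat G)$ by noting that the left eigenvector $U$ of every reversible (hence deterministic) transformation is, by orthogonality, also a right eigenvector, and finally reads off the unit Bloch sphere from the fact that pure states have Euclidean norm $\sqrt 2$ in the new coordinates. You instead normalize the \emph{state space} first, mapping the centre and principal axes of the ellipsoid of Lemma~\ref{l3} to the standard frame so that $\mathcal{N}_\mathrm{gbit}$ becomes the unit ball, and then recover the group structure from convex geometry: determinism gives the first row $[1,\hat{\mathbf 0}]$, the centroid-equivariance of affine bijections of the ball kills the translation block $\mathbf c_G$, and a linear bijection of the unit ball is necessarily orthogonal, with connectedness selecting $\mathrm{SO}(d)$. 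What the paper's route buys is that orthogonality of the group is immediate from the invariant quadratic form already constructed for Lemma~\ref{l3}; what yours buys is a more elementary argument that does not re-invoke the Haar integral, plus an explicit treatment of $d\geq 2$, which the paper's written proof leaves implicit. The one soft spot is your exclusion of $d=0$ by declaring a single-state gbit ``unfit to serve as an information unit''; a crisper argument is that $k_\mathrm{gbit}=1$ makes the two-gbit state space one-dimensional with only the (product) identity as reversible deterministic dynamics, contradicting Postulate~3.3 --- but since the paper does not derive $d\geq 2$ inside this lemma either, this is a minor point rather than a gap.
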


\begin{proof}
First, we follow the reparametrization procedure explained at the
end of Section~\ref{ss}. In this case, the invertible
transformation is $L= \sqrt{2}\, r^{-1} W$, where $r$ and $W$ are
defined in the proof of Lemma~\ref{l3}. All the matrices of the
reparametrized group $\tilde{\mathcal{G}} _\mathrm{gbit} = L\circ
\mathcal{G} _\mathrm{gbit} \circ L^{-1}$ are orthogonal. To see
this, note that
\begin{eqnarray*}
    \tilde G\t \tilde G &=& (WGW^{-1})\t (WGW^{-1})
\\ &=&
    \int_{\mathcal{G} _\mathrm{gbit}}
    \hspace{-5mm} dH\ W^{-1} G\t H\t H G W^{-1}
\\ &=&
    W^{-1} W^2 W^{-1} = \id\ ,
\end{eqnarray*}
where we have used the fact that $W$ is symmetric. From now on,
when referring to the state space, effects and transformations of
a gbit, we mean the reparametrized ones,
\begin{align}
    \mathcal{S} _\mathrm{gbit} &\to L (\mathcal{S} _\mathrm{gbit})\ ,
\nonumber \\
    U\t &\to U\t L^{-1}\ ,
\\ \nonumber
    \mathcal{G} _\mathrm{gbit} &\to
    L\, \mathcal{G} _\mathrm{gbit} L^{-1}\ ,
\end{align}
and we omit the tilde.

The orthogonality of the transformations in $\mathcal{G}
_\mathrm{gbit}$ implies that a left eigenvector $U\t G=G$ is also a
right eigenvector $GU=U$. Hence, the matrix group $\mathcal{G}
_\mathrm{gbit}$ contains a trivial one-dimensional representation
spanned by $U$, and another representation denoted
$\hat{\mathcal{G}} _\mathrm{gbit}$. So, for any $G\in \mathcal{G}
_\mathrm{gbit}$ there is $\hat G \in \hat{\mathcal{G}}
_\mathrm{gbit}$ such that
\begin{equation}
    G = \left[ \begin{array}{cc}
        1 & \mathbf{0} \\
        \mathbf{0} & \hat G
    \end{array} \right]\ .
\end{equation}
In our notation, symbols with a hat ``$\hat{\phantom{o}}$'' are
associated to the non-trivial representation of $\mathcal{G}
_\mathrm{gbit}$. In this basis, the normalization effect is $U=
[1, \hat{\mathbf{0}}]$, and the pure states are $\omega= [1, \hat
\omega]$ with $|\hat \omega|=1$, where the latter is a consequence
of the fact that, according to the definition of $L$, pure states
have Euclidean norm $|\omega| =\sqrt{2}$.
\end{proof}

From now on, when dealing with a gbit, we adopt the representation
given in Lemma~\ref{l4}. Note that, in this representation, pure
states are those with unit normalization $u=1$, and unit-length
Bloch vector $|\hat\omega|=1$. Each effect is characterized by a
vector $E= [e, \hat E]$ such that
\begin{equation}
    E(\omega)=  E\cdot \omega= u(e+ \hat E \cdot \hat\omega)\ .
\end{equation}
The consistency constraint $E(\mathcal{S} _\mathrm{gbit}) \subseteq [0,1]$ is
equivalent to $|\hat E|\leq e$ and $e+ |\hat E| \leq 1$. An effect $E$ for which
there are two states $\omega_0, \omega_1 \in \mathcal{S} _\mathrm{gbit}$ such
that $E(\omega_0)=0$ and $E(\omega_1)=1$, satisfies $e= |\hat E|= 1/2$. Such
effects are in one-to-one correspondence with pure states $\omega \in
\mathrm{ext} \mathcal{N} _\mathrm{gbit}$ via the map $E= \omega/2$.

\subsection{Two gbits}
\label{SubsecTwoGbits}
Tomographic Locality and Lemma~\ref{l4} imply that two-gbit states can be
represented as
\begin{equation}\label{b2}
    \omega_{AB} =
    u \left[ \begin{array}{c}
            1  \\
            \alpha \\
            \beta \\
            \gamma
    \end{array} \right] \in \mathcal{S} _\mathrm{gbit} ^2\ ,
\end{equation}
where $\alpha= \hat{\omega}_A \in \mathbb{R}^d$, $\beta=
\hat{\omega}_B \in \mathbb{R}^d$, and $\gamma \in \mathbb{R}^d
\otimes \mathbb{R}^d$ is called the ``correlation matrix''. Note
that the ordering of the components in~(\ref{b2}) is different
from the one in~(\ref{psiAB}). At this stage, we know that
$|\alpha|, |\beta| \leq 1$, but we do not know much about the full
structure of $\mathcal{S} _\mathrm{gbit} ^2$, nor its associated
group $\mathcal{G} _\mathrm{gbit} ^2$. However, these two objects
are very much related. Indeed, the postulate of Continuous
Reversibility implies that the set of pure states for two gbits is
$\mathrm{ext} \mathcal{N} _\mathrm{gbit} ^2 = \mathcal{G}
_\mathrm{gbit} ^2 (\omega \otimes \omega)$, where $\omega \in
\mathrm{ext} \mathcal{N} _\mathrm{gbit}$ is any pure state. In
order to see this, recall that product states belong to
$\mathcal{S} _\mathrm{gbit} ^2$, and that the product of two
locally pure states is a globally pure state. This connection
between $\mathcal{S} _\mathrm{gbit} ^2$ and $\mathcal{G}
_\mathrm{gbit} ^2$ implies that the consistency constraints for
$\mathcal{S} _\mathrm{gbit} ^2$ mentioned at the end of
Section~\ref{cs}, translate to constraints for $\mathcal{G}
_\mathrm{gbit} ^2$. These constrains are the premise of
Lemma~\ref{l5}.

\begin{lemma}\label{l5}
    Let $\hat{\mathcal{G}}_\mathrm{gbit}$ be a connected subgroup of
$\mathrm{SO} (d)$
 which is transitive in the unit sphere of $\mathbb{R} ^d$, where
$d\geq 2$. Let $\mathcal{G} _\mathrm{gbit} ^2$ be a connected
group of real $(d+1)^2 \times (d+1)^2$ matrices which satisfies
the following:
\begin{enumerate}
    \item $(\mathcal{G} _\mathrm{gbit} \otimes \mathcal{G} _\mathrm{gbit}) \leq
\mathcal{G} _\mathrm{gbit} ^2$,
    \item $(E\otimes E) \cdot G (\omega \otimes \omega) \in [0,1]$ for all $G
\in \mathcal{G} _\mathrm{gbit} ^2$,
\end{enumerate}
where $\omega = [1, \hat\omega]$, $|\hat\omega|=1$ and $E=
\omega/2$. If $d\neq 3$, then the group $\mathcal{G}
_\mathrm{gbit} ^2$ must be a subgroup of $\mathcal{H}_d \otimes
\mathcal{H}_d$, where
\begin{equation}
\label{qb}
    \mathcal{H}_d =
    \left\{
    \left[ \begin{array}{cc}
        1 & \mathbf{0} \\
        \mathbf{0} & Q
    \end{array} \right]
    |\  Q \in \mathrm{SO} (d)
    \right\}\ .
\end{equation}
\end{lemma}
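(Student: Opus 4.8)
The plan is to pass to Lie algebras and reduce the claim to a statement about antisymmetric generators. Since $\mathcal{G}^2_\mathrm{gbit}$ is connected, $\mathcal{G}^2_\mathrm{gbit}\subseteq\mathcal{H}_d\otimes\mathcal{H}_d$ is equivalent to $\mathfrak{g}^2\subseteq\mathfrak{h}_d\oplus\mathfrak{h}_d=\{\hat X\otimes\id+\id\otimes\hat Y\}$, with $\hat X,\hat Y\in\mathfrak{so}(d)$ embedded in $\mathfrak{so}(d+1)$ as in Lemma~\ref{l4}. First I would Haar-average over $\mathcal{G}^2_\mathrm{gbit}$ to get an invariant inner product; decomposing $\mathbb{R}^{(d+1)^2}=\mathbb{R}(U\otimes U)\oplus(\mathbb{R} U\otimes\mathbb{R}^d)\oplus(\mathbb{R}^d\otimes\mathbb{R} U)\oplus(\mathbb{R}^d\otimes\mathbb{R}^d)$ under the product subgroup $\mathcal{G}_\mathrm{gbit}\otimes\mathcal{G}_\mathrm{gbit}$ --- whose four summands are irreducible and pairwise inequivalent, since transitivity of $\hat{\mathcal{G}}_\mathrm{gbit}$ on $S^{d-1}$ forces irreducibility --- Schur's lemma makes that inner product block-scalar, so we may work in a representation in which $\mathcal{G}^2_\mathrm{gbit}$ is orthogonal while keeping the tensor structure of the two-gbit space. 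Because reversible transformations are deterministic (so that $U\otimes U$ is a fixed vector), every generator $M\in\mathfrak{g}^2$ has a vanishing $(U\otimes U)$ row and column; in the above block form $M$ is then described by its two marginal blocks $S_A,S_B\in\mathfrak{so}(d)$, a marginal--marginal block $K$ ($d\times d$), two marginal--correlation blocks, and a correlation--correlation block $Q\in\mathfrak{so}(d^2)$. The goal becomes: $K=0$, the two marginal--correlation blocks vanish, and $Q=S_A\otimes\id+\id\otimes S_B$.

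The engine is the positivity condition (item~2 of the lemma), differentiated. Using $\mathcal{G}_\mathrm{gbit}\otimes\mathcal{G}_\mathrm{gbit}\subseteq\mathcal{G}^2_\mathrm{gbit}$ one first upgrades it to: $(\omega_1/2\otimes\omega_2/2)\cdot G(\omega\otimes\omega)\in[0,1]$ for all pure gbit states $\omega_1,\omega_2$ and all $G\in\mathcal{G}^2_\mathrm{gbit}$ (absorb the local rotations carrying $\omega$ to $\omega_1,\omega_2$ into $G$). Taking $\omega_1=\omega_-:=[1,-\hat\omega]$, so that $\omega_-\cdot\omega=0$, and $G=e^{tM}$, the function $t\mapsto(\omega_-\otimes\omega_2)\cdot e^{tM}(\omega\otimes\omega)$ is non-negative and vanishes at $t=0$; hence its first derivative vanishes and its second is $\ge0$. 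As $\omega_2$ ranges over all pure states, the first-order condition reads $(\omega_-^{\mathrm T}\otimes\id)\,M(\omega\otimes\omega)=0$ for every unit $\hat\omega$, plus the mirror identity contracting the second tensor factor. Expanding in blocks gives, componentwise, polynomials in $\hat\omega$ of degree $\le3$ vanishing on the unit sphere; matching homogeneous parts (the ideal of the sphere being generated by $|\hat\omega|^2-1$) forces $K$ to be antisymmetric and then, combining all four contraction families, forces $K=0$ and kills both marginal--correlation blocks, while pinning the totally symmetric part of $Q$ (acting on $\hat\omega\otimes\hat\omega$) to $S_A,S_B$.

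What remains --- and this is the hard part --- is to show that the residual freedom in the correlation block $Q$ is exactly $S_A\otimes\id+\id\otimes S_B$, with no genuinely non-product contribution, precisely when $d\neq3$. Here one must feed in the second-order inequality $(\omega_-^{\mathrm T}\otimes\omega_2)\cdot M^2(\omega\otimes\omega)\ge0$ together with closure of $\mathfrak{g}^2$ under the bracket, so that commutators of a hypothetical non-product generator with the local ones remain subject to all the above constraints. Carrying this through requires knowing which groups $\hat{\mathcal{G}}_\mathrm{gbit}$ can occur, so one invokes the classification of connected groups acting transitively on spheres ($\mathrm{SO}(d)$, $\mathrm{U}(m)$, $\mathrm{SU}(m)$, $\mathrm{Sp}(m)$, $\mathrm{Sp}(m)\!\cdot\!\mathrm{U}(1)$, $\mathrm{Sp}(m)\!\cdot\!\mathrm{Sp}(1)$, $G_2$, $\mathrm{Spin}(7)$, $\mathrm{Spin}(9)$) and checks case by case that a non-product $Q$ compatible with positivity and bracket-closure would equip $\mathbb{R}^d$ with a cross-product/Hodge-type trilinear structure --- a coincidence available only in $d=3$, which is exactly the door through which two-qubit quantum mechanics enters, with its extra nine entangling generators built from $\sigma_a\otimes\sigma_b$. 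This case analysis is essentially the content of Ref.~\cite{MMPA} and is where I expect all the difficulty to reside; the earlier steps are bookkeeping with Haar averages, Schur's lemma and sphere polynomials. Once it is settled one has $\mathfrak{g}^2\subseteq\mathfrak{h}_d\oplus\mathfrak{h}_d$, hence $\mathcal{G}^2_\mathrm{gbit}\subseteq\mathcal{H}_d\otimes\mathcal{H}_d$.
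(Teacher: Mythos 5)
Your opening reductions (passing to the Lie algebra, Haar-averaging to an invariant inner product, using locality of $\mathcal{G}_\mathrm{gbit}\otimes\mathcal{G}_\mathrm{gbit}$ to upgrade the positivity condition to arbitrary pure product effects, and extracting first- and second-order conditions from $t\mapsto(\omega_-\otimes\omega_2)\cdot e^{tM}(\omega\otimes\omega)\geq 0$) are sensible and broadly in line with how this result is actually established; note, though, that the paper itself gives no proof of Lemma~\ref{l5} beyond citing~\cite{MMPA}, so the only meaningful comparison is with that reference. The problem is that your proposal stops exactly where the lemma's content begins: the elimination of non-product generators for \emph{every} $d\neq 3$ and \emph{every} transitive $\hat{\mathcal{G}}_\mathrm{gbit}$ is not carried out, only described as a case check you ``expect'' to work. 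Worse, the mechanism you offer for it is wrong as a criterion: an invariant vector cross product (a ``Hodge-type trilinear structure'') is \emph{not} exclusive to $d=3$; it also exists in $d=7$, invariant under $G_2\subset\mathrm{SO}(7)$, and $G_2$ is on your own list of groups transitive on the sphere ($S^6$). So the availability of such a trilinear invariant cannot be what singles out $d=3$, and your sketch would fail to exclude $d=7$ (and says nothing concrete about the other exceptional cases such as $\mathrm{Spin}(7)$, $\mathrm{Spin}(9)$, or the unitary/symplectic families). Ruling these out is precisely the hard, case-by-case interplay of positivity, bracket closure and representation theory done in~\cite{MMPA}, and it is missing here.

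A smaller technical slip: you assert that the four blocks $\mathbb{R}(U\otimes U)$, $\mathbb{R}U\otimes\mathbb{R}^d$, $\mathbb{R}^d\otimes\mathbb{R}U$, $\mathbb{R}^d\otimes\mathbb{R}^d$ are irreducible and pairwise inequivalent under $\mathcal{G}_\mathrm{gbit}\otimes\mathcal{G}_\mathrm{gbit}$. The correlation block $\mathbb{R}^d\otimes\mathbb{R}^d$ need not be irreducible over $\mathbb{R}$ when the defining real representation of $\hat{\mathcal{G}}_\mathrm{gbit}$ is of complex or quaternionic type (e.g.\ $\hat{\mathcal{G}}_\mathrm{gbit}=\mathrm{U}(m)$ on $\mathbb{R}^{2m}$, or already $\mathrm{SO}(2)$ at $d=2$), so the Schur/averaging step does not yield a block-scalar metric in general and needs a more careful statement. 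This is repairable, but together with the missing case analysis it means the proposal is an outline of the strategy of~\cite{MMPA} rather than a proof of the lemma.
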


\begin{proof}
See Reference~\cite{MMPA}.
\end{proof}

Lemma~\ref{l5} shows that, except for $d=3$, the joint state space
$\mathcal{S} _\mathrm{gbit} ^2$ only contains separable states,
and its associated group $\mathcal{G} _\mathrm{gbit} ^2$ only
contains non-interacting transformations, which is in
contradiction with Postulate~3.3. Hence, the only possibility is
$d=3$, and this is getting quite close to QT. It turns out that
the only subgroup of $\mathrm{SO} (3)$ which is transitive on the
unit sphere of $\mathbb{R} ^3$ is $\mathrm{SO} (3)$ itself. Hence,
from now on, we assume $d=3$ and $\hat{\mathcal{G}} _\mathrm{gbit}
= \mathrm{SO} (3)$.

\subsection{Emergence of quantum theory}

Since QT satisfies our postulates, it must fit the structure that
we have found up to this stage. Indeed, the state of a qubit can
be represented by the three-dimensional Bloch vector $\hat\omega$
(and the normalization parameter $u$ if we consider general
states). That is, the state space $\mathcal{S}_\mathrm{gbit}$ of a
gbit and that of a qubit, $\mathcal{S}_\mathrm{qubit}$, are
equivalent in the sense of Subsection~\ref{ss}: both are in
one-to-one correspondence via an invertible linear map $L$. It is
given by
\begin{eqnarray*}
   L:\mathcal{S}_\mathrm{gbit}&\to&\mathcal{S}_\mathrm{qubit} \\
   \left[\begin{array}{c} 1 \\ \hat\omega\end{array}\right] &\mapsto& \frac 1
2\left(1\cdot\id +\hat\omega\cdot\vec\sigma\right),
\end{eqnarray*}
where $\vec\sigma=(\sigma_x,\sigma_y,\sigma_y)$ is a vector with
the Pauli matrices as entries. This maps Bloch vectors onto
density matrices,
$\mathcal{S}_\mathrm{qubit}=\{\rho\in\mathbb{C}^{2\times
2}\,\,|\,\, \rho\geq 0, {\rm tr}(\rho)=1\}$. The group of
reversible transformations for one gbit is $\hat{\mathcal{G}}
_\mathrm{gbit} =\mathrm{SO} (3)$, which is equivalent to the
adjoint representation of $\mathrm{SU} (2)$:
\[
   \mathcal{G}_\mathrm{qubit} = L \mathcal{G}_\mathrm{gbit} L^{-1} = \left\{
\rho\mapsto U \rho U^\dagger\,\,|\,\, U \in SU(2)\right\}.
\]
The group of reversible transformations for $n$ qubits, denoted $\mathcal{G}
_\mathrm{qubit}^n$, is the adjoint action of $\mathrm{SU} (2^n)$:
\[
   \mathcal{G}_\mathrm{qubit}^n  = \left\{ \rho\mapsto U \rho U^\dagger\,\,|\,\,
U \in SU(2^n)\right\}.
\]
As shown in~\cite{MMPA}, this is the \textit{only} possible choice
of any $n$-gbit dynamics which satisfies our postulates---up to an
equivalence transformation, which is $L$ for a single gbit
(mapping Bloch vectors to density matrices), and, correspondingly,
$L^{\otimes n}=L\otimes\ldots\otimes L$ for $n$ gbits, mapping the
corresponding state vectors to density matrices of size $2^n$:
\begin{lemma}\label{l6}
    Let $\hat{\mathcal{G}}_\mathrm{gbit} = \mathrm{SO} (3)$ and let $\mathcal{G}
_\mathrm{gbit} ^n$ be a connected group of $(3+1)^n \times (3+1)^n$ real
matrices which satisfies the following:
\begin{enumerate}
    \item $(\mathcal{G} _\mathrm{gbit} \otimes \cdots \otimes \mathcal{G}
_\mathrm{gbit}) \leq \mathcal{G} _\mathrm{gbit} ^n$,
    \item $(E\otimes \cdots \otimes E) \cdot G (\omega \otimes\cdots \otimes
\omega) \in [0,1]$ for all $G \in \mathcal{G} _\mathrm{gbit} ^n$,
\end{enumerate}
where $\omega = [1, \hat\omega]$, $|\hat\omega|=1$ and $E= \omega/2$. The only
possible groups $\mathcal{G} _\mathrm{gbit} ^n$ are:
\begin{enumerate}
    \item $\mathcal{G} _\mathrm{gbit} \otimes \cdots \otimes \mathcal{G}
_\mathrm{gbit}$

    \item $\left(L^{-1}\right)^{\otimes n} \mathcal{G}_\mathrm{qubit}^n
L^{\otimes n}$.
\end{enumerate}
\end{lemma}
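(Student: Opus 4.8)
The plan is to pass from the connected group $\mathcal{G}_\mathrm{gbit}^n$ to its Lie algebra $\mathfrak{g}$ and classify $\mathfrak{g}$ (connectedness then recovers the group by exponentiation). By hypothesis~1, $\mathfrak{g}$ contains the local algebra $\mathfrak{l}=\bigoplus_{j=1}^{n}\mathfrak{so}(3)_j$, embedded via the block form of Lemma~\ref{l4} (Bloch-vector rotations on factor $j$, trivial on the scalar component and on the other factors). First I would note that $\mathcal{G}_\mathrm{gbit}^n$ is compact: $\mathcal{S}_\mathrm{gbit}^n$ is bounded (tensor products of valid single-gbit effects span $(\mathbb{R}^{4^n})^*$ and, by hypothesis~2 together with convexity, take values in $[0,1]$ on it), it spans $\mathbb{R}^{4^n}$, and it is $\mathcal{G}_\mathrm{gbit}^n$-invariant, so the group is a bounded closed subgroup of $\mathrm{GL}(4^n,\mathbb{R})$, hence compact. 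Compactness provides an $\mathrm{Ad}$-invariant inner product (so every $X\in\mathfrak{g}$ is skew) and a fixed point in the normalization hyperplane which, being $\mathfrak{l}$-invariant, must be the maximally mixed state $m=[1,\hat{\mathbf 0}]^{\otimes n}$. Writing $\mathbb{R}^{4^n}=\bigoplus_{S\subseteq[n]}V_S$ with $V_S=\bigotimes_{j\in S}\mathbb{R}^3_j$ for the $\mathrm{SO}(3)^{\times n}$-isotypic decomposition (each factor $\mathbb{R}^4=\mathbf 1\oplus\mathbf 3$), the conditions $Xm=0$ and $U^{\otimes n}\circ X=0$ (reversible maps preserve the unit effect) say that every $X\in\mathfrak{g}$ both annihilates $V_\emptyset$ and has no component landing in $V_\emptyset$.

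Next I would decompose $\mathfrak{g}$ under $\mathrm{Ad}(\mathrm{SO}(3)^{\times n})$. Using $\mathbf 3\otimes\mathbf 3=\mathbf 1\oplus\mathbf 3\oplus\mathbf 5$, the relevant endomorphisms split into irreducibles $\rho_1\boxtimes\cdots\boxtimes\rho_n$ with $\rho_j\in\{\mathbf 1,\mathbf 3,\mathbf 5\}$; call $\{j:\rho_j\neq\mathbf 1\}$ the support of such a piece. The part of support $1$ and type $\mathbf 3$ is exactly $\mathfrak{l}$, and since $\hat{\mathcal{G}}_\mathrm{gbit}=\mathrm{SO}(3)$ has already been fixed there is nothing else of support $\le 1$. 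Hence the entire non-local content of $\mathfrak{g}$ sits in supports of size $\ge 2$, and the goal is to show it is either empty (case~1) or equals the conjugate $(L^{-1})^{\otimes n}\mathfrak{g}_\mathrm{qubit}^n L^{\otimes n}$ of the adjoint algebra of $\mathrm{SU}(2^n)$ (case~2).

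The technical core is the two-body analysis, which I would import from~\cite{MMPA} (the $d=3$ companion of Lemma~\ref{l5}). Restricting $\mathcal{G}_\mathrm{gbit}^n$ to any pair of factors $\{j,k\}$ gives a connected group on $\mathbb{R}^{16}$ still obeying hypotheses~1 and~2 with $n=2$. One exploits hypothesis~2 infinitesimally: for $X\in\mathfrak{g}$ and arbitrary unit Bloch vectors, $t\mapsto (E_{\hat v}^{\otimes n})\cdot e^{tX}\omega_{\hat v}^{\otimes n}$ attains its maximum value $1$ at $t=0$, so its first derivative vanishes and its second derivative is $\le 0$ for every choice of the $\hat v_i$; expanding these into polynomial identities in the $\hat v_i$ and using the representation theory of $\mathrm{SO}(3)$ one concludes that (a) no $\mathbf 5$ (``spin-$2$'') factor can occur anywhere, and (b) the support-$2$ component on $\{j,k\}$, if non-zero, is the full irrep $\mathbf 3\boxtimes\mathbf 3$ and coincides with the two-qubit interaction generators $L^{\otimes 2}$-conjugate to $\{i\,\sigma_a^{(j)}\otimes\sigma_b^{(k)}\}_{a,b}$. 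This is the step I expect to be the main obstacle --- squeezing out of a single scalar inequality, via $\mathrm{SO}(3)$-harmonics, that precisely the quantum-compatible generators survive --- so I would cite~\cite{MMPA} for it rather than reprove it.

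Finally I would run a bracket-generation and induction argument. Since a gbit is a fixed type of system and transformations can always act trivially on the remaining factors, the induced group on any pair equals the two-gbit group $\mathcal{G}_\mathrm{gbit}^2$; by (b) this is $\mathrm{SO}(3)^{\times2}$ for all pairs or the two-qubit group for all pairs. In the first case (b) leaves no support-$2$ generators, and then no higher-support ones either, so $\mathfrak{g}=\mathfrak{l}$ and $\mathcal{G}_\mathrm{gbit}^n=\mathcal{G}_\mathrm{gbit}\otimes\cdots\otimes\mathcal{G}_\mathrm{gbit}$: case~1. In the second case $\mathfrak{g}$ contains every two-qubit interaction algebra; commuting these across overlapping pairs produces higher Pauli strings --- e.g. $[\,i\sigma_x^{(1)}\sigma_x^{(2)},\,i\sigma_y^{(2)}\sigma_x^{(3)}\,]\propto i\sigma_x^{(1)}\sigma_z^{(2)}\sigma_x^{(3)}$ --- and iterating generates the full adjoint algebra of $\mathrm{SU}(2^n)$, which is generated by the local $\mathfrak{su}(2)$'s together with the two-body Pauli elements. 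By the spin-$2$ exclusion and the positivity bookkeeping of the previous step, $\mathfrak{g}$ cannot exceed this algebra, so $\mathfrak{g}=(L^{-1})^{\otimes n}\mathfrak{g}_\mathrm{qubit}^n L^{\otimes n}$; exponentiating and using connectedness gives case~2. The subtleties to watch are that the pairwise interaction algebras are automatically mutually compatible (all being restrictions of the single global $\mathfrak{g}$) and that their closure is $\mathfrak{su}(2^n)$ exactly, not a proper subalgebra missing some many-body terms.
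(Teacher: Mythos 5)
Your overall architecture (pass to the Lie algebra, prove compactness, decompose under $\mathrm{Ad}\bigl(\mathrm{SO}(3)^{\times n}\bigr)$, import the two-body core, then bracket-generate) is sensible, and it is worth noting that the paper itself offers no proof of this lemma beyond the single line ``See Reference~\cite{MMPA}'' (with the $n$-partite classification attributed to~\cite{GdlT} in the main text). However, the part you actually attempt yourself --- the passage from two-body data to the $n$-body dichotomy --- has genuine gaps at exactly the places where those references do their real work. First, the reduction to pairs is not licensed by the lemma's hypotheses: hypothesis~1 only places single-gbit product transformations inside $\mathcal{G}_\mathrm{gbit}^n$, there is no canonical ``restriction of $\mathcal{G}_\mathrm{gbit}^n$ to a pair of factors'', and your appeal to ``transformations can act trivially on the remaining factors'' smuggles in an extra consistency assumption (closure under embeddings $\mathcal{G}_\mathrm{gbit}^2\otimes I^{\otimes(n-2)}$, equivalently a uniform, permutation-covariant composition rule) that is not among the stated hypotheses. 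Relatedly, the uniform dichotomy ``$\mathrm{SO}(3)^{\times 2}$ for all pairs or the two-qubit group for all pairs'' does not follow from hypotheses~1--2 alone: for $n=3$ the connected group $\bigl(L^{-1}\bigr)^{\otimes 2}\mathcal{G}_\mathrm{qubit}^2 L^{\otimes 2}\otimes\mathcal{G}_\mathrm{gbit}$ satisfies both hypotheses while being neither of the two admissible conclusions, so excluding such hybrids requires precisely the additional structure and analysis of~\cite{GdlT}, which your sketch does not supply.

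Second, the upper bound is asserted rather than proved. Excluding spin-$2$ ($\mathbf{5}$) factors slot by slot does not confine $\mathfrak{g}$ to the adjoint algebra of $\mathrm{SU}(2^n)$: $\mathrm{End}(\mathbf{1}\oplus\mathbf{3})$ contains three inequivalent copies of $\mathbf{3}$ (scalar-to-vector, vector-to-scalar, and antisymmetric vector-to-vector blocks), so for each support $S$ with $|S|\geq 2$ the full matrix space contains several $\mathrm{SO}(3)^{\times n}$-irreps of the ``allowed'' type that are not commutator-type (quantum) generators; showing that the first- and second-derivative positivity conditions, which in hypothesis~2 are only available for \emph{product} pure states against \emph{product} pure effects, eliminate all of these (and, in case~1, eliminate every generator of support $\geq 3$) is the technical heart of~\cite{MMPA} and~\cite{GdlT}, and it cannot be dispatched with ``positivity bookkeeping''. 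Two smaller points: compactness requires passing to the closure of the group (harmless but should be said), and preservation of $U^{\otimes n}$, which your fixed-point and block-structure steps use, is not one of the listed hypotheses and must be derived or assumed explicitly. Your bracket computation showing that the local $\mathfrak{su}(2)$'s together with all two-body Pauli generators produce at least $\mathfrak{su}(2^n)$ is correct; what is missing is the ``at most'' direction and a legitimate pairwise reduction, which is why the paper defers the entire lemma to the companion papers rather than arguing as you do.
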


\begin{proof}
See Reference~\cite{MMPA}.
\end{proof}

Let us summarize. In Section~\ref{asg} we proved that the set of normalized
states of a gbit must be an Euclidean unit ball. In Section~\ref{SubsecTwoGbits}
we have shown that this ball must have dimension three. Hence, the state space
of a single gbit is identical to that of a single qubit. Note that this does not
in itself automatically imply that the state space of $n$ gbits is identical to
the state space of $n$ qubits; however, the previous lemma shows that it does by
invoking our postulates.
Hence, all systems allowed by our postulates can be described within the quantum
formalism.


\end{document}